\documentclass[a4paper,11pt,reqno]{amsart}

\usepackage[headings]{fullpage} 
\usepackage[utf8]{inputenc}
\usepackage{mathrsfs}
\usepackage{dsfont}
\usepackage{hyperref}
\hypersetup{bookmarksdepth=3}
\usepackage{amsmath}

\allowdisplaybreaks[4]
\usepackage{amssymb}
\usepackage{amsthm}
\usepackage{amsfonts}
\usepackage{amstext}
\usepackage{amsopn}
\usepackage{amsxtra}
\usepackage{mathrsfs}
\usepackage{dsfont}
\usepackage{esint}
\usepackage{enumitem}
\usepackage[capitalize]{cleveref}
\usepackage{braket}
\usepackage{mathtools}

\usepackage{subcaption}

\usepackage{tikz}
\usetikzlibrary{arrows}
\usepackage{tkz-tab}
\usetikzlibrary{math}
\usepackage{pgfplots}
\pgfplotsset{compat=1.18}

\usepackage[dvipsnames]{xcolor}
 
\usepackage{csquotes}
\usepackage[backend=bibtex,sorting=nty,citestyle=numeric-comp,bibstyle=ieee,maxbibnames=99,doi=false,isbn=false,url=false,eprint=false,dashed=true]{biblatex}
\AtEveryBibitem{\clearfield{month}}
\AtEveryBibitem{\clearfield{day}}
\theoremstyle{plain}
\newtheorem{theorem}{Theorem}

\newtheorem{proposition}[theorem]{Proposition}
\newtheorem{corollary}[theorem]{Corollary}

\theoremstyle{definition}

\newtheorem{assumption}[theorem]{Assumption}

\theoremstyle{remark}
\newtheorem{remark}[theorem]{Remark}

\newcommand{\bbB}{\mathbb{B}}

\newcommand{\bbR}{\mathbb{R}}

\newcommand{\bbZ}{\mathbb{Z}}

\newcommand{\ii}{\infty}
\newcommand\1{{\ensuremath {\mathds 1} }}
\renewcommand\phi{\varphi}

\newcommand{\cV}{\mathcal{V}}

\newcommand{\cW}{\mathcal{W}}

\newcommand{\cJ}{\mathcal{J}}
\newcommand{\cK}{\mathcal{K}}

\renewcommand{\geq}{\geqslant}
\renewcommand{\leq}{\leqslant}
\renewcommand{\hat}{\widehat}

\usepackage{color}

\newcommand{\dd}{\mathrm{d}}

\renewcommand{\d}[1]{\ensuremath{\operatorname{d}\!{#1}}}
\newcommand{\asym}{\textmd{\normalfont a}}
\newcommand{\sym}{\textmd{\normalfont s}}

\newcommand{\mdc}{{\textmd{c}}}

\author[A. Triay]{Arnaud Triay}
\address{Department of Mathematics, LMU Munich, Theresienstrasse 39, 80333 Munich, Germany}
\email{triay@math.lmu.de}
\author[F. L. A. Visconti]{François L. A. Visconti}
\address{Institute of Science and Technology Austria, Am Campus 1, 3400, Klosterneuburg, Austria}
\email{francois.visconti@ista.ac.at}

\title[Stability of Bose gases with repulsive three-body interactions]{Stability of attractive 3D Bose gases with repulsive three-body interactions}

\date{}

\begin{document}
 
\maketitle

\begin{abstract}
	We consider three-dimensional Bose gases with attractive two-body interactions and repulsive three-body interactions. We prove the stability of the second kind of the system for a large class of interaction potentials and scaling regimes, which includes the Gross--Pitaevskii regime for the three-body potential. In addition, we even prove the classical stability of the system for a more restrictive class of interactions and scaling regimes. The main ingredient of the proof is a bound that was derived recently in a study by the authors on the dynamics of Bose gases with three-body interactions in the Gross--Pitaevskii regime \cite{TriayVis26}.
\end{abstract}

\tableofcontents

\section{Introduction}

The stability of large quantum systems is an old topic that goes back to the works of Fisher and Ruelle \cite{Fisher64,FisherRue66,Ruelle63,Ruelle70}, where the distinction between classical and quantum stability was first made. On the one hand, an $N$-body interaction potential $U_N$ is said to be \textit{classically stable} if
\begin{equation*}
	U_N(x_1,\dots,x_N) \geq -CN,
\end{equation*}
for all $x_1,\dots,x_N\in\bbR^3$ and for some universal nonnegative constant $C$. On the other hand, the quantum equivalent to this is called \textit{stability of the second kind} and reads
\begin{equation*}
	\sum_{i = 1}^N-\Delta_i + U_N(x_1,\dots,x_N) \geq -CN
\end{equation*}
as a quadratic form on $L_{\sym/\asym}^2(\bbR^{3N})$---the symmetric/antisymmetric subspaces of $L^2(\bbR^{3N})$---again for some universal nonnegative constant $C$.

In the bosonic case, potentials that are stable of the second kind but not classically stable are notoriously rare. In particular, Fisher and Ruelle \cite{FisherRue66} proved that, under mild regularity assumptions, if the potential $U_N$ is of the form
\begin{equation*}
	U_N(x_1,\dots,x_N) = \sum_{1 \leq i,j\leq N}V(x_i - x_j),
\end{equation*}
for some fixed $V$, and if there exists an $N$ and a particle configuration $x_1,\dots,x_N\in\bbR^3$ such that
\begin{equation*}
	\sum_{1\leq i,j\leq N}V(x_i - x_j) < 0,
\end{equation*}
then the system is not stable of the second kind. A famous example is that of the two-component charged Bose gas whose ground state energy diverges as $-N^{7/5}$ \cite{ConlonLieYau88,Dyson67,LiebSol04}. Another example is that of bosons with gravitational forces whose ground state energy diverges as $-N^2$ \cite{LevyLebl69}. In contrast, fermions enjoy much better stability properties thanks to the Pauli exclusion principle. Notably, Dyson and Lenard \cite{DysonLen67,DysonLen68} proved that the two-component Fermi gas is stable of the second kind. Later, their approach was drastically simplified by Lieb and Thirring \cite{LiebThi75} who introduced important new mathematical inequalities---now known as Lieb–Thirring inequalities. Though these inequalities are in general false for bosons, Lundholm, Portmann and Solovej \cite{LundholmPorSol15}---based on \cite{LundholmSol14}---managed to prove Lieb--Thirring-type inequalities for Bose gases with repulsive pair interactions. One of the key elements in their proof is a local exclusion principle that results from the repulsion between the particles.

In the present work, we are interested in the stability of 3D Bose gases with (partially) attractive two-body interactions and repulsive three-body interactions. Such Bose gases have drawn considerable attention in the Physics literature \cite{Bla16,BisBla15,GamFreTomCho00,JosRic97,LahMetFroKoc08,Pet22} because repulsive three-body interactions can counterbalance the instability induced by attractive two-body interactions. Our main result is a rigorous justification of this fact under very general assumptions on the interaction, and for a wide range of scaling regimes, which includes the Gross--Pitaevskii regime for the three-body interaction potential. Furthermore, we also prove the classical stability of the system for a narrower range of scaling regimes. Notably, both results are true for very large attractive two-body potentials, as long as the three-body potential does not vanish in an arbitrary small region around the origin. This is in stark contrast to the situation without three-body interactions, where the system is unstable even for very small attractive potentials.

In the Mathematics literature, Bose gases with (partially) attractive two-body interactions have drawn considerable attention; we refer to \cite{JunVis26,LewNamRou14a,LewNamRou14,LewNamRou17,NamRou20,Tri18} for results on the ground state energy of such systems, and to \cite{BosDieNam23,Chong21,JebPic18,NamNap19,Pic10,Zha00} for results on their time evolution. Likewise, Bose gases with repulsive three-body interactions have been objects of numerous works; we refer to \cite{Bro25,DoanNgu26,JungeVis26,NamRicTri22a,NamRicTri22,NamRicTri23,Vis24} for results on the ground state energy, to \cite{BabRou26,ChePav11,Che11,CheHol19,LiYao21,NamSal19,Xie15,Yua15} for results on the dynamics, and to \cite{FarhatRou26,LueNamZhu26,RouSoh25} for results on the derivation of the Gibbs measure for the 1D quintic focusing NLS. Despite the extensive literature on the two different systems, our result is the first rigorous result on the stabilisation of attractive Bose gases by repulsive three-body interactions in 3D; see \cite{LiZha27,LiLiWan26,NguRic25,Nguyen26} for results in 2D for soft scaling regimes, and \cite{NguRic24} for a related result in 1D.
\\

\noindent
\textbf{Organisation of the paper.} In Section \ref{sec:main_result}, we describe the mathematical model and provide the main results: Theorems~\ref{th:stability_second_kind}~and~\ref{th:classical_stability}. These results are proved in Sections~\ref{sec:stability_second_kind_proof}~and~\ref{sec:classical_stability_proof}, respectively. The key ingredient of the proof of Theorem~\ref{th:stability_second_kind} is Proposition \ref{prop:general_two_body_potential_bound}, whose proof is given in Section~\ref{sec:proof_strategy_main_estimate}.

\section{Mathematical model and main results}

\label{sec:main_result}

We consider systems described by the Hamiltonian
\begin{equation*}
	H_N = \sum_{i = 1}^N-\Delta_i + \sum_{1 \leq i<j\leq N}V_N(x_i - x_j) + \sum_{1 \leq i<j<k\leq N}W_N(x_i - x_j, x_i - x_k)
\end{equation*}
acting on $L_\sym^2(\bbR^{3N})$, where the interaction potentials are of the form
\begin{equation}
	\label{eq:potentials_scaling}
	V_N = N^{3\alpha - 1}V(N^\alpha\cdot) \quad \textmd{and} \quad W_N = N^{6\beta - 2}W(N^\beta\cdot),
\end{equation}
for some fixed nonnegative parameters $\alpha$ and $\beta$, and some fixed functions $V$ and $W$ that satisfy the following properties.
\begin{assumption}
	\label{ass:stability_second_kind}
	The negative part $V_- = \max(0,-V)$ of $V$ is bounded from above by a radially symmetric and decreasing function with compact support and of class $L^{2+\varepsilon}(\bbR^3)$, for some $\varepsilon > 0$. Moreover, the potential $W$ is nonnegative and it has a positive soft-core, meaning that there exist positive numbers $R$ and $C$ such that
	\begin{equation}
		\label{eq:three_body_potential_bounded_below_assumption}
		W \geq C \quad \textmd{on $B(0,R)$.}
	\end{equation}
\end{assumption}
Note that the above conditions on $V_-$ in particular ensure that $H_N$ is bounded from below on the core domain $C_\mdc^\ii(\bbR^{3N})$. Our main result is the following.

\begin{theorem}[Stability of the second kind]
	\label{th:stability_second_kind}
	Let $0 \leq \alpha \leq \beta \leq 1/2$ and assume that $V$ and $W$ satisfy Assumptions~\ref{ass:stability_second_kind}. Then, there exists a nonnegative constant $C$ depending only on $V$ and $W$ such that
	\begin{equation}
		\label{eq:stability_second_kind}
		H_N \geq -CN
	\end{equation}
	holds on $L_\sym^2(\bbR^{3N})$.
\end{theorem}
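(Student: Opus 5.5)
The plan is to bound the attractive two-body term by a combination of kinetic energy and the repulsive three-body term, plus an error of order $N$. Since $V \geq -V_-$ and $V_-$ is dominated by a radial decreasing compactly supported $L^{2+\varepsilon}$ function, we may replace $V$ by $-U$ with $U\geq 0$ radial and decreasing. Also, $W\geq C\1_{B(0,R)}$, so we may replace $W$ by $C\1_{B(0,R)}$. By rescaling lengths we may assume $R$ is large compared with the support of $U$; at worst this changes the constants.

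The target is an operator inequality on $L^2_\sym(\bbR^{3N})$ of the form
\begin{equation*}
	\sum_{i<j} N^{3\alpha-1}U(N^\alpha(x_i-x_j)) \leq \frac12\sum_i -\Delta_i + \sum_{i<j<k} N^{6\beta-2}C\1_{B(0,R)}(N^\beta(x_i-x_j),N^\beta(x_i-x_k)) + CN,
\end{equation*}
which is presumably what Proposition~\ref{prop:general_two_body_potential_bound} provides. I would establish it by localising in the positions of the other particles. Fix particle $i$ and the positions of the others, and let $n_i(x)$ be the number of particles $j\neq i$ within distance $\sim N^{-\alpha}$ of $x$.

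The attraction felt by particle $i$ is a one-body potential $\sum_j U_N(x_i-x_j)$ with $U_N = N^{3\alpha-1}U(N^\alpha\cdot)$. On each ball of radius $N^{-\alpha}$ this potential has size roughly $n_i N^{3\alpha-1}$. I then split according to the local occupation $n_i$.

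\textbf{Low-density regions.} Suppose $n_i\lesssim N^{1-\alpha}$. Using the Sobolev/Hölder bound $U_N\leq \delta(-\Delta)+C_\delta\|U_N\|_{3/2}^{\ldots}$, or more precisely a Lieb--Thirring/CLR-type estimate for a single particle in the aggregated potential, the cost is controlled by $\delta$ times the kinetic energy plus $O(1)$ per particle. The $L^{2+\varepsilon}$ assumption is used here, as it is what gives the needed local integrability.

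\textbf{High-density regions.} Where $n_i\gg N^{1-\alpha}$, I use the three-body term instead. Since $\alpha\leq\beta$, a ball of radius $N^{-\alpha}$ can be covered by about $N^{3(\beta-\alpha)}$ balls of radius $\sim RN^{-\beta}$. If $m$ particles sit in one small ball, the three-body energy there is at least about $N^{6\beta-2}m^3$. The two-body attraction in the same ball is at most about $N^{3\alpha-1}m^2$. For $m\gtrsim N^{3\alpha-6\beta+1}$ the cubic term wins; the constraint $\beta\leq 1/2$ ensures this threshold is at most of order $N^{1-3\beta}$, consistent with an $O(N)$ leftover. Summing the pointwise inequality $N^{3\alpha-1}m^2 \leq N^{6\beta-2}m^3 + C\cdot\textmd{(something linear in $m$)}$ over boxes gives the classical part of the estimate.

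To make this quantum, I would follow the approach of \cite{TriayVis26}. The idea is to split $U_N = U_N^{(1)} + U_N^{(2)}$, where $U_N^{(2)}$ is softened to scale $N^{-\beta}$: replace $U_N$ by its convolution with a bump at scale $N^{-\beta}$, or rather by a part living on scale $N^{-\beta}$. The difference $U_N - U_N\ast\chi$ is then handled by the kinetic energy through a standard three-dimensional estimate, for which $\beta\leq 1/2$ and the $L^{2+\varepsilon}$ condition are needed so that the error is $O(1)$ per pair times $N^{-1}$. The softened part is bounded classically by the box argument above, since it is essentially constant at scale $N^{-\beta}$, where the three-body indicator detects triples.

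The main obstacle is this quantum-to-classical step. Controlling $\sum_{i<j}(U_N - U_N\ast\chi)(x_i-x_j)$ by $\frac12\sum_i-\Delta_i+CN$ uniformly in the density requires a many-body argument, because particle clustering can make the naive one-body bound fail. I would first try to prove it with a Dyson-lemma/Lieb--Yngvason-type localisation: the kinetic energy in balls of radius $N^{-\beta}$ around each particle, combined with the three-body indicator, controls the number of close neighbours. This is exactly where Proposition~\ref{prop:general_two_body_potential_bound}, taken from \cite{TriayVis26}, should be invoked as a black box. Finally, combining the two parts and absorbing the half kinetic energy gives $H_N\geq -CN$.
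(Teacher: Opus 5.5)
There is a genuine gap: the split you propose fails in the main regime. You bound the part of $U_N$ softened at scale $N^{-\beta}$ \emph{classically} by the three-body term plus $CN$, and you leave only the singular remainder to the kinetic energy. For $\beta>1/3$ (with $\alpha>0$) no pair potential living at scale $N^{-\alpha}$, however smooth, can be bounded this way. Take widely separated clusters of diameter $\sim N^{-\alpha}$ whose particles are mutually more than $RN^{-\beta}$ apart. They give a superlinear attraction: $\sim N^{3\beta}$ when $\beta-\alpha\leq1/3$, and already for $\alpha=\beta=1/2$ spread-out pairs at distance $\sim N^{-1/2}$ give $\sim N^{3/2}$. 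Meanwhile the three-body term, with $W$ replaced by $C\1_{B(0,R)}$ as you do, vanishes. So the kinetic energy must act on the bulk of $V_N$.

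This is exactly what the local exclusion bound \eqref{eq:general_two_body_potential_bound} does. It uses boxes of size $\ell_1$ (of order $N^{-1/3}$ in the most singular regime) and a cubic convexity bound in highly occupied boxes. In weakly occupied boxes it uses the splitting $\1_\Lambda=p+q$ with Poincaré--Sobolev on the $q$ part. This avoids the $N^{2\alpha}$-per-particle cost of localising at scale $N^{-\alpha}$ that your low-density step would incur.

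Your box count is also incorrect as written. Since $U_N$ has range $N^{-\alpha}\gg N^{-\beta}$, $n$ particles in a ball of radius $N^{-\alpha}$ attract each other by $\sim N^{3\alpha-1}n^2$, not by $\sum_s N^{3\alpha-1}m_s^2$ over the small balls. The cross terms require a bound $n^3\lesssim K^2\sum_s[m_s\geq3]m_s^3$ with $K\sim N^{3(\beta-\alpha)}$, which holds only for $n\gtrsim K$. Configurations with at most two particles per small ball are precisely those where $\cK$ is indispensable. Relatedly, $\beta\leq1/2$ enters through the kinetic coefficient $N^{3\alpha-1}\ell_2^2(\ell_2/\ell_3)(\ell_1/\ell_3)^3\sim N^{4\beta-2}$ at $\ell_1\sim N^{-1/3}$. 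Your comparison $N^{3\alpha-6\beta+1}\leq N^{1-3\beta}$ is just $\alpha\leq\beta$.

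You invoke Proposition~\ref{prop:general_two_body_potential_bound} as a black box exactly at the hard step. But it only controls nonnegative indicator potentials $\1(|x_i-x_j|\leq\ell_2)$. It therefore applies neither to the signed difference $U_N-U_N\ast\chi$ nor, by itself, to a possibly unbounded $V$ that is merely dominated by a radially decreasing $L^{2+\varepsilon}$ function.

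The missing idea is the layer cake representation \eqref{eq:layer_cake}, $V=-\int_0^\infty\1(|x|\leq R_t)\,\dd t$, combined with a \emph{level-dependent} choice of parameters:
\begin{itemize}
\item the choice $\ell_1=\varepsilon^{1/3}R_t^{1/2}N^{-1/3}$ balances the $N$-term $R_t^3\ell_1^{-3}$ against the three-body term $\ell_1^3N\cW_N$ in \eqref{eq:two_body_potential_bound_layer_cake};
\item the weaker bound \eqref{eq:general_two_body_potential_bound_weaker}, with an $R_t$-dependent $\eta$, handles $R_tN^{-\alpha}\leq RN^{-\beta}$;
\item Sobolev handles very small $R_t$.
\end{itemize}
This yields $\cV_{N,t}\leq CR_t^{3/2-\tau}(C_\varepsilon N+\varepsilon\cK+\varepsilon\cW_N)$ for every $t$. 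Integrability of $R_t^{3/2-\tau}$ in $t$, via Hölder and \eqref{eq:layer_cake_Lp_norm}, is precisely where $L^{2+\varepsilon}$ is used; ``local integrability'' alone would only require $L^{3/2}$. In the regime $\alpha\leq1/3\leq\beta$ you must first use \eqref{eq:two_body_potential_more_singuilar} to replace the level sets by a pair potential of range $N^{-1/3}$, and then apply the previous case.

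Finally, a global rescaling of lengths leaves the ratio of $R$ to the support of $U$ unchanged, so that reduction is not available (though it is also not needed).
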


Below are some comments on the main result. Since we are mainly interested in the purely attractive case $V \leq 0$, we only discuss that one:
\begin{enumerate}
	\item If $W = 0$, the Hamiltonian $H_N$ is not stable of the second kind as soon as $\alpha > 0$, regardless of how small $V$ is. Indeed, using a factorised trial state $u_N^{\otimes N}$ with $u_N = N^{3\alpha/2}u(N^\alpha\,\cdot\,)$, we find
	\begin{equation*}
		\dfrac{1}{N}\langle u_N^{\otimes N},H_Nu_N^{\otimes N}\rangle \leq N^{2\alpha}\int_{\bbR^3}\vert\nabla u\vert^2 + \dfrac{N^{3\alpha}}{2}\int_{\bbR^3}(V\ast\vert u\vert^2)\vert u\vert^2 \xrightarrow[N\rightarrow\ii]{} -\ii.
	\end{equation*}
	Hence, it is very nontrivial that \eqref{eq:stability_second_kind} holds for two-body potentials with an arbitrary large negative part as long as the three-body potential has a positive soft-core, which can be arbitrarily small.
	\item Though the assumption $\beta \geq \alpha$ might not be optimal, the Hamiltonian $H_N$ is not stable of the second kind if either $3\beta/2 < \alpha$ or $1 < \alpha$. To see why, consider a configuration consisting of clusters of $k$ particles localised in balls of size $N^{-\alpha}$, so that the total kinetic energy is of order $N^{1 + 2\alpha}$. Then, the number of clusters is proportional to $N/k$, which implies that the two-body interactions are of order $-kN^{3\alpha}$ and the three-body interactions of order $k^2N^{6\beta - 1}$. Optimising over $k$ (taking potentially $k = 2$ or $k = N$) leads to instability.
\end{enumerate}

The main difficulty is the regime $\beta > 1/3$, where the three-body interactions are too short-ranged to yield classical stability. Therefore, the kinetic operator is essential to ensure the stability of the system. More precisely, it combines with the repulsive three-body interactions to create a local exclusion principle, which counterbalances the collapse induced by the attractive two-body interactions. This statement is made rigorous in Proposition \ref{prop:general_two_body_potential_bound}, which was proved by the authors in \cite{TriayVis26}; we explain the proof strategy in Section~\ref{sec:proof_strategy_main_estimate}. This local exclusion principle bears similarities with the one used in \cite{LundholmPorSol15,LundholmSol14}.

When $\alpha \leq \beta \leq 1/3$, in addition to the stability of the second kind \eqref{eq:stability_second_kind}, we also establish the classical stability of the system under the following more restrictive assumptions on $V$.

\begin{assumption}
	\label{ass:classical_stability}
	The negative part $V_- = \max(0,-V)$ of $V$ is bounded from above and that there exists $C,\varepsilon > 0$ such that $V_-(x) \leq C\vert x\vert^{-3-\varepsilon}$. Moreover, the potential $W$ is nonnegative and it has a positive soft-core, meaning that there exist positive numbers $R$ and $C$ such that
	\begin{equation}
		\label{eq:three_body_potential_bounded_below_assumption_classical}
		W \geq C \quad \textmd{on $B(0,R)$.}
	\end{equation}
\end{assumption}

\begin{theorem}[Classical stability]
	\label{th:classical_stability}
	Let $0 \leq \alpha \leq \beta \leq 1/3$ and assume that $V$ and $W$ satisfy Assumption~\ref{ass:classical_stability}. Then, there exists a nonnegative constant $C$ depending only on $V$ and $W$ such that
	\begin{equation}
		\label{eq:classical_stability}
		\sum_{1 \leq i<j\leq N}V_N(x_i - x_j) + \sum_{1 \leq i<j<k\leq N}W_N(x_i - x_j, x_i - x_k) \geq -CN,
	\end{equation}
	for all $N\geq 1$ and $x_1,\dots,x_N\in\bbR^3$.
\end{theorem}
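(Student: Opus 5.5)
Since the statement is a pointwise inequality, the plan is to localise it to each particle and to treat the two scalings separately. First I reduce to scale-free potentials. Set $y_i = N^\beta x_i$. Then
\[
	V_N(x_i-x_j) = N^{3\alpha-1}V\bigl(N^{\alpha-\beta}(y_i-y_j)\bigr), \qquad W_N(x_i-x_j,x_i-x_k) = N^{6\beta-2}W(y_i-y_j,y_i-y_k) .
\]
I keep only the attractive part $V_-$ and, using $W\geq C\1_{B(0,R)}$, the soft-core lower bound on $W$. Write $n_i = \#\{j\neq i : \vert y_i-y_j\vert \leq r\}$ with $r$ a small multiple of $R$, chosen so that any two particles $j,k$ within distance $r$ of $y_i$ give $(y_i-y_j,y_i-y_k)\in B(0,R)$. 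Then the three-body term is at least $c N^{6\beta-2}\sum_i n_i(n_i-1)$, the overcounting of triples being absorbed into $c$. The goal is the per-particle bound
\[
	\sum_{j\neq i}V_{N,-}(x_i-x_j) \leq C + c' N^{6\beta-2} n_i^2 ,
\]
where the two-body energy is split symmetrically among the particles. Summing over $i$ then gives $-CN$, after absorbing the diagonal $n_i$ term, which costs at most $C N^{6\beta-2}N\leq CN$ since $\beta\leq 1/3$.

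Next I bound the attraction on particle $i$ with a shell decomposition in the $y$ variables. Since $V_-$ is bounded and $V_-(x)\leq C\vert x\vert^{-3-\varepsilon}$, I have $V_-(N^{\alpha-\beta}y)\leq C\min\bigl(1,(N^{\alpha-\beta}\vert y\vert)^{-3-\varepsilon}\bigr)$. Cover $\bbR^3$ by cubes of side $r/2$ on the lattice. For the cube $Q$ containing $y_j$, the quantity $m_Q$ (the number of particles in $Q$) is at most $1+n_j$, and every particle in $Q$ sees all the others in $Q$. The key step is then to sum over cubes $Q$ the contribution $N^{3\alpha-1}\, m_Q \cdot \min\bigl(1,(N^{\alpha-\beta}\operatorname{dist}(y_i,Q))^{-3-\varepsilon}\bigr)$. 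Applying Cauchy--Schwarz (or simply $ab\leq a^2/(2\delta)+\delta b^2/2$) to $m_Q$ against the weight gives
\[
	N^{3\alpha-1}\sum_Q m_Q w_Q \leq \delta N^{6\beta-2}\sum_Q m_Q^2 w_Q + \delta^{-1}N^{6\alpha-6\beta}\sum_Q w_Q .
\]
The lattice sum satisfies $\sum_Q w_Q \lesssim N^{3(\beta-\alpha)}$, because the decay $3+\varepsilon>3$ makes it summable with effective radius $N^{\beta-\alpha}$. Hence the second term is $\lesssim N^{3\alpha-3\beta}\leq 1$ when $\alpha\leq\beta$. This is where $\alpha\leq\beta$ and the integrability of the tail are used.

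The first term must be controlled by the three-body energy. After summing over $i$, the weight $\sum_{i} w_{Q}(y_i)$ equals $\sum_{Q'} m_{Q'} w(Q,Q')$. So I will instead organise everything at the level of cubes. The attraction is at most
\[
	N^{3\alpha-1}\sum_{Q,Q'} m_Q m_{Q'} w(Q,Q') \leq N^{3\alpha-1}\sum_Q m_Q^2 \sum_{Q'}w(Q,Q') \lesssim N^{3\alpha-1}N^{3(\beta-\alpha)}\sum_Q m_Q^2 = N^{3\beta-1}\sum_Q m_Q^2 ,
\]
by Schur/Young with the symmetric kernel $w$. The repulsion is at least $cN^{6\beta-2}\sum_Q m_Q(m_Q-1)(m_Q-2)$. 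Using $m^2\leq \delta m^3 N^{3\beta-1}+\delta^{-1}m N^{1-3\beta}\cdot(\cdots)$ naively is not enough, so the step I expect to be the main obstacle is this final balancing. Pointwise I have $N^{3\beta-1}m^2 \leq \tfrac{c}{2}N^{6\beta-2}m^3 + C' m$, since for $m\geq m_0:=C N^{1-3\beta}$ the cubic term dominates, and for $m<m_0$ one has $N^{3\beta-1}m^2\leq C m$. The cubic $m(m-1)(m-2)$ versus $m^3$ differs only for $m\leq 2$, which is harmless. Summing $C'm_Q$ over $Q$ gives $C'N$. This closes the argument and uses $\beta\leq1/3$ only through $m_0\geq1$, i.e. the absorption is consistent. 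The remaining care is to check the exact Schur bound $\sup_Q\sum_{Q'}w(Q,Q')\lesssim N^{3(\beta-\alpha)}$, including the near-diagonal cubes where $w\leq1$, and that the overcounting between cubes and the ball of radius $R$ only affects constants.
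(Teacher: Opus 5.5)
Your cube-level argument is correct, and it reaches the theorem by a genuinely different route.

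The paper proceeds in two separate steps. First it \emph{singularises} $V_N$, proving $\sum_{i<j}V_N(x_i-x_j)\geq -C\sum_{i<j}N^{3\beta-1}\1(\vert x_i-x_j\vert\leq R_0N^{-\beta})-CN$. This uses the Fisher--Ruelle-type Proposition~\ref{prop:stabilisation_soft_core}, a Fourier-positivity construction with Yukawa-type functions, together with the monotonicity of the positive Fourier transform of a Gaussian (Corollary~\ref{cor:singularisation}). Then a per-particle Cauchy--Schwarz inequality trades this soft core for $W_N$, and the diagonal two-body term is absorbed because $N^{3\beta-1}\leq 1$.

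You do both steps at once and without Fourier analysis:
\begin{itemize}
\item Your Schur bound $\sum_{Q,Q'}m_Qm_{Q'}w(Q,Q')\leq\sup_Q\sum_{Q'}w(Q,Q')\sum_Q m_Q^2\lesssim N^{3(\beta-\alpha)}\sum_Q m_Q^2$ is an elementary proof of the singularisation at $\delta=\beta$, since $\sum_Q m_Q^2=N+2\#\{\text{pairs sharing a cube}\}$.
\item Your per-cube inequality $BN^{3\beta-1}m^2\leq c\,N^{6\beta-2}\binom{m}{3}+Dm$, with $D=\max(2B,27B^2/c)$ independent of $N$, replaces the Cauchy--Schwarz step.
\end{itemize}

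What your route buys is self-containedness and slightly weaker hypotheses. The tail of $V_-$ enters only through the lattice sum, so a bounded, integrable, radially nonincreasing majorant of $V_-$ suffices. The same Schur argument, with a two-body core of large height in place of the cubic term, would even reprove Proposition~\ref{prop:stabilisation_soft_core}. What the paper's route buys is a pair of reusable statements: stabilisation by a soft core of any radius, and singularisation to any $\delta\in[\alpha,1/3]$. These mirror the structure of the proof of Theorem~\ref{th:stability_second_kind}.

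Two corrections for a write-up.

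First, drop the opening per-particle target $\sum_{j\neq i}V_{N,-}(x_i-x_j)\leq C+c'N^{6\beta-2}n_i^2$; it is false. A particle with $n_i=0$ at distance of order $N^{-\alpha}$ from a dense cluster feels an attraction of order $N^{3\alpha-1}$ times the cluster size. Your weighted Cauchy--Schwarz produces $\sum_Q m_Q^2w_Q$ rather than $n_i^2$ anyway, and only the cube-level argument is needed.

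Second, state the balancing precisely, writing $t=N^{3\beta-1}$.
\begin{itemize}
\item For $m\geq 3$, use $\binom{m}{3}\geq m^3/27$. The cubic term wins once $tm\geq 27B/c$; otherwise $Btm^2\leq(27B^2/c)\,m$.
\item For $m\leq 2$, use $Btm^2\leq 2Bm$.
\end{itemize}
The case $m\leq 2$ is the only place where $\beta\leq1/3$ is used. The hypothesis $\alpha\leq\beta$ is used only to get the lattice-sum bound $\sup_Q\sum_{Q'}w(Q,Q')\lesssim N^{3(\beta-\alpha)}$ rather than $O(1)$.
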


Here are comments on Theorem~\ref{th:classical_stability}, again in the purely attractive case $V \leq 0$:
\begin{enumerate}
	\item If $W = 0$, the system is not classically stable whenever $V(0) < 0$ for any $\alpha > 0$. This is easily seen by taking $x_1 = \dots = x_N$.
	\item The bound \eqref{eq:classical_stability} clearly implies \eqref{eq:stability_second_kind}, though the converse is in general not true, which is why stronger assumptions on the potential $V$ are required to prove classical stability.
	\item If either $\alpha > 1/3$ or $\beta > 1/3$, the system is not classically stable. In the former case, consider spread out pairs of particles at relative distances $N^{-\alpha}$ from each other. In the latter case with $\alpha \leq 1/3$, consider spread out particle clusters of size $N^{-\alpha}$ such that all particles are at a distance at least $N^{-\beta}$ from each other; there are approximately $N^{3(\beta - \alpha)}$ particles in each cluster.
\end{enumerate}

The main ingredient in the proof of Theorem \ref{th:classical_stability} is an estimate that shows that any two-body potential $V$ satisfying Assumption~\ref{ass:classical_stability} can be made classically stable by adding a sufficiently strong positive soft-core. This estimate is taken from \cite[Lemma 21]{Tri18} and stated in Proposition~\ref{prop:stabilisation_soft_core}; we give the proof in Appendix~\ref{app:stabilisation_soft_core}. In turn, this allows us to bound the potential $V_N$ by a more singular two-body potential $N^{3\beta - 1}\tilde{V}(N^\beta\cdot)$, which can be controlled by the three-body potential $W_N$.
\\

\noindent
\textbf{Acknowledgments.} This work was partly funded by the Deutsche Forschungsgemeinschaft (DFG, German Research Foundation) through the TRR 352 Project ID	47090307. F. L. A. V. acknowledges partial support from the European Research Council through the ERC CoG RAMBAS Project Nr. 101044249.
\\

\noindent
\textbf{AI usage statement.} AI tools were used solely for proofreading purposes, and in Sections~\ref{sec:stability_second_kind_less_singular}~and~\ref{sec:stability_second_kind_least_singular} to optimise the choice of the different parameters. All proofs were written and checked by the authors, who take full responsibility for the results.

\section{Proof of Theorem \ref{th:stability_second_kind}}

\label{sec:stability_second_kind_proof}

Throughout this section, we assume without loss of generality that $V$ is nonpositive, radially symmetric nondecreasing, and with compact support. The proof of Theorem \ref{th:stability_second_kind} is split between the regimes $\beta \geq \alpha \geq 1/3$, $\beta \geq 1/3 \geq \alpha$ and $1/3 \geq \beta \geq \alpha$, which are covered in Sections~\ref{sec:stability_second_kind_most_singular},~\ref{sec:stability_second_kind_less_singular}~and~\ref{sec:stability_second_kind_least_singular}, respectively. Before that, we provide some preliminary results in Section~\ref{sec:stability_second_kind_preliminaries}.

\subsection{Local exclusion bounds and layer cake representation}

\label{sec:stability_second_kind_preliminaries}

The main technical ingredients in the proof of Theorem~\ref{th:stability_second_kind} are local exclusion bounds, which are stated in the following proposition. The estimate \eqref{eq:general_two_body_potential_bound} is taken from \cite[Proposition 4.1]{TriayVis26}, and the bounds \eqref{eq:general_two_body_potential_bound_weaker} and \eqref{eq:two_body_potential_more_singuilar} are proved in a similar way; see Section \ref{sec:proof_strategy_main_estimate} for the proof strategy of \ref{eq:general_two_body_potential_bound} as well as the proofs of \eqref{eq:general_two_body_potential_bound_weaker} and \eqref{eq:two_body_potential_more_singuilar}.

\begin{proposition}[Local exclusion bounds]
	\label{prop:general_two_body_potential_bound}
	Let $\ell_1,\ell_2,\ell_3 > 0$ satisfy $\ell_1 \geq \ell_2$ and $\ell_1 \geq \ell_3$. Define
	\begin{equation*}
		\cK = \sum_{i = 1}^N-\Delta_i, \quad \cV^{(\ell_2)} = \sum_{1\leq i< j\leq N}\mathds{1}(|x_i - x_j|\leq \ell_2), \quad \cV^{(\ell_3)} = \sum_{1\leq i< j\leq N}\mathds{1}(|x_i - x_j|\leq \ell_3)
	\end{equation*}
	and
	\begin{equation*}
		\cW^{(\ell_3)} = \sum_{1\leq i<j<k \leq N}\mathds{1}(|(x_i - x_j,x_i - x_k)|\leq\ell_3).
	\end{equation*}
	Then, there exists a universal constant $C$ such that the inequalities
	\begin{equation}
		\label{eq:general_two_body_potential_bound}
		\cV^{(\ell_2)} \leq C\Big(\frac{\ell_2}{\ell_1}\Big)^3N + C\ell_2^2\Big(1 + \dfrac{\ell_2}{\ell_3}\Big)\Big(\frac{\ell_1}{\ell_3}\Big)^3\cK + C\Big(1 + \Big(\frac{\ell_2}{\ell_3}\Big)^3\Big)\Big(\frac{\ell_1}{\ell_3}\Big)^3\cW^{(\ell_3)},
	\end{equation}
	\begin{equation}
		\label{eq:general_two_body_potential_bound_weaker}
		\cV^{(\ell_2)} \leq C\eta\Big(\dfrac{\ell_2}{\ell_3}\Big)^3N + C\eta\ell_2^2\Big(\dfrac{\ell_1}{\ell_3}\Big)^3\cK + C\eta^{-1}\Big(\dfrac{\ell_1}{\ell_3}\Big)^3\cW^{(\ell_3)},
	\end{equation}
	for any $\eta \geq C$, and
	\begin{equation}
		\label{eq:two_body_potential_more_singuilar}
		\cV^{(\ell_2)} \leq C\Big(\frac{\ell_1}{\ell_3}\Big)^3N + C\Big(\frac{\ell_1}{\ell_3}\Big)^3\cV^{(\ell_3)}
	\end{equation}
	hold on $L_\sym^2(\bbR^{3N})$.
\end{proposition}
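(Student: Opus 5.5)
The plan is a localisation argument. We cover $\bbR^3$ by cubes of side $\ell_1$ and average over translations of the grid, so that for any $i<j$ the pair indicator $\1(|x_i-x_j|\le \ell_2)$ is controlled, up to a universal factor, by the averaged number of pairs lying in a common cube $Q$. The constraint $\ell_1\ge\ell_2$ is what makes this work: both particles land in the same cube with probability bounded below. Inside each cube we then subdivide further into small cubes of side comparable to $\ell_3$, about $(\ell_1/\ell_3)^3$ of them per large cube; this is where the factor $(\ell_1/\ell_3)^3$ will come from. Since $\ell_1\ge\ell_3$, the three-body indicator restricted to particles sharing a small cube is dominated by $\cW^{(\ell_3)}$, again after translation averaging.

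The core local estimate is the exclusion principle on a single cube. Let $Q$ be a cube of side $\ell$ containing $n$ particles. Using the Neumann Laplacian on $Q$ and its gap $\sim \ell^{-2}$ above the constant mode, we aim to prove
\[
\frac{n(n-1)}{2}\,\1(n\ge 2)\lesssim \frac{\ell^3}{|Q|}\cdot(\dots)+\ell^2\sum_{i\in Q}(-\Delta_i^{Q,\mathrm{Neu}})+\sum_{i<j<k\in Q}\1(|(x_i-x_j,x_i-x_k)|\le\ell_3).
\]
The mechanism is as follows. If the $n$-particle wavefunction is close to the constant in each variable (zero Neumann mode), the particles are uniformly spread. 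Then the probability that a given triple is within $\ell_3$ is $\gtrsim (\ell_3/\ell)^6$, so the three-body term is of order $n^3(\ell_3/\ell)^6$. This beats the pair count $n^2(\ell_2/\ell)^3$ once $n$ is large, and for small $n$ the count is absorbed into the $N$ term. If instead the wavefunction is far from constant, the kinetic term pays $\ell^{-2}$ per particle. Optimising the split between these two cases with a weight $\eta$ gives \eqref{eq:general_two_body_potential_bound_weaker}, and the form with factors $(1+\ell_2/\ell_3)$ gives \eqref{eq:general_two_body_potential_bound}. Concretely, the local estimate will be proved on symmetric functions by writing $P_0$ for the projection onto constants in one variable and using the operator inequality $-\Delta^{Q}\ge \pi^2\ell^{-2}(1-P_0)$. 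We then split $\1(|x_i-x_j|\le \ell_2)$ into the piece where both $x_i,x_j$ are projected to constants, which has expectation $\lesssim(\ell_2/\ell)^3$, and remainders, which are controlled by Cauchy–Schwarz with $\eta$ against kinetic energy. For the remainders we need a bound $\1(|x|\le\ell_2)\lesssim \ell_2^2(-\Delta)$-type localised in $Q$, which is where the $\ell_2^2$ factor enters.

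For \eqref{eq:two_body_potential_more_singuilar} no kinetic energy is needed; it is purely combinatorial. Cover a ball of radius $\ell_2\le\ell_1$ by $\lesssim(\ell_1/\ell_3)^3$ cells of size $\ell_3/2$. If $m_c$ particles lie in cell $c$ of a translated grid, then the pairs within $\ell_2$ are bounded by $\sum_{c,c'\text{ near}}m_cm_{c'}\le C(\ell_1/\ell_3)^3\sum_c m_c^2$. Moreover $m_c^2\le 2\binom{m_c}{2}+m_c$, and $\binom{m_c}{2}$ counts pairs within $\ell_3$. Summing over cells gives $N+\cV^{(\ell_3)}$ times $(\ell_1/\ell_3)^3$. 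This is a pointwise inequality, hence valid as operators.

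The main obstacle is the local three-body exclusion step. We must show that the uniform (constant-mode) part of an $n$-particle state in a cube genuinely produces three-body energy of order $n^3(\ell_3/\ell)^6$ while the pair term costs only $n^2(\ell_2/\ell)^3$. We must also keep track of the cross terms between projected and unprojected variables for triples, without losing powers of $n$. I would first try a Cauchy–Schwarz on $\langle\Psi,(1-P_0^{\otimes3})\1_{\mathrm{triple}}\Psi\rangle$ with weight $\eta$, in the style of \cite{LundholmPorSol15}. This is exactly where the parameters $\eta$ and the factor $(\ell_1/\ell_3)^3$ from subdividing the cube are determined.
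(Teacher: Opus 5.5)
\textbf{Gap in the main estimate.} The genuine gap is in \eqref{eq:general_two_body_potential_bound_weaker}. You also give no argument for \eqref{eq:general_two_body_potential_bound}, and it does not follow from \eqref{eq:general_two_body_potential_bound_weaker} by choosing $\eta$: matching the $N$-term would force $\eta=(\ell_3/\ell_1)^3\leq 1$, which the hypothesis $\eta\geq C$ excludes.

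Your dichotomy ``constant Neumann mode $\Rightarrow$ three-body energy, non-constant part $\Rightarrow$ kinetic energy'' does not close. For the non-constant remainders you use $q_i\1(|x_i-x_j|\leq\ell_2)q_i\lesssim\ell_2^2(-\Delta_i)$. Summing over the partners $j$ in the cube multiplies this by the occupation $n$, whereas the coefficient allowed in front of $\cK$ is only $\eta\ell_2^2(\ell_1/\ell_3)^3$. So the kinetic energy can pay only when $n\lesssim\eta(\ell_1/\ell_3)^3$. For larger $n$ the non-constant part must also be paid for by $\cW^{(\ell_3)}$, which sends you back to the step you leave open: extracting three-body energy from $P_0^{\otimes 3}\Psi$ while controlling cross terms. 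The naive Cauchy--Schwarz there loses a factor $n^2$, because each $\|(1-P_0)_i\Psi\|^2\lesssim\ell^2\langle\Psi,-\Delta_i\Psi\rangle$ gets multiplied by the $\binom{n-1}{2}$ triples containing $i$. As it stands, the proposal does not prove the main estimate.

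\textbf{The missing idea.} The three-body part needs no wavefunction argument at all. Occupation numbers and $\cW^{(\ell_3)}$ are multiplication operators, and the paper splits according to $N_r^{(3\ell_1)}$ rather than according to $P_0$.

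Suppose $N_r^{(3\ell_1)}\geq\eta(\ell_1/\ell_3)^3$ with $\eta\geq C$. Let $M\sim(\ell_1/\ell_3)^3$ be the number of sub-boxes of side $\ell_3/\sqrt6$. For \emph{every} configuration, at most $2M$ particles lie in sub-boxes with fewer than three particles, so most particles lie in sub-boxes with at least three. Writing $n_s$ for the occupation of sub-box $s$, Hölder gives $(N_r^{(3\ell_1)})^3\lesssim(\ell_1/\ell_3)^6\sum_s[n_s\geq 3]n_s^3\lesssim(\ell_1/\ell_3)^6\,\#\{\text{close triples}\}$. Dividing by the threshold yields $(N_r^{(3\ell_1)})^2\lesssim\eta^{-1}(\ell_1/\ell_3)^3\,\#\{\text{close triples}\}$.

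Uniformity plays no role here. An even spread in fact minimises the triple count, so your ``uniform sector'' is the worst case, not the source of the bound.

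Kinetic energy is then used only in boxes with $N_r^{(3\ell_1)}<\eta(\ell_1/\ell_3)^3$. There $p/q$ acts on particle $i$ alone, and the cutoff is rewritten as $[N_{r,i}^{(3\ell_1)}<\eta(\ell_1/\ell_3)^3]$ so that it commutes with $p_{r,i},q_{r,i}$. This produces exactly $C\eta\ell_2^2(\ell_1/\ell_3)^3\cK$ from the $q$-part and $C\eta(\ell_2/\ell_3)^3N$ from the $p$-part.

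\textbf{The third inequality.} Your proof of \eqref{eq:two_body_potential_more_singuilar} is correct and more direct than the paper's, which reruns the occupancy split with the pigeonhole lemma for pairs. Cauchy--Schwarz over neighbouring cells of side $\ell_3/2$, together with $m_c^2=2\binom{m_c}{2}+m_c$, gives it pointwise, even with $(1+\ell_2/\ell_3)^3$ in place of $(\ell_1/\ell_3)^3$.

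One small fix in your localisation step: with cubes of side exactly $\ell_1$, translation averaging gives no lower bound on the probability that two points at distance $\ell_2=\ell_1$ share a cube. Use cubes of side $2\ell_1$, or the paper's enlargement to $\Lambda_r^{(3\ell_1)}$.
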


\begin{remark}
	When $V$ is bounded from below, Theorem~\ref{th:stability_second_kind} follows almost immediately from Proposition~\ref{prop:general_two_body_potential_bound}. When $\beta \geq \alpha \geq 1/3$, we use \eqref{eq:general_two_body_potential_bound} with
	\begin{equation*}
		\ell_1 \simeq N^{-1/3}, \quad \ell_2 \simeq N^{-\alpha} \quad \textmd{and} \quad \ell_3 \simeq N^{-\beta}.
	\end{equation*}
	When $\beta \geq 1/3 \geq \alpha$, we use \eqref{eq:two_body_potential_more_singuilar} with
	\begin{equation*}
		\ell_1 = \ell_2 \simeq N^{-\alpha} \quad \textmd{and} \quad \ell_3 \simeq N^{-1/3}
	\end{equation*}
	in order to bound $V_N$ by a new two-body potential with range of order $N^{-1/3}$, and then we apply the result from the previous case to that potential. When $1/3 \geq \beta \geq \alpha$, we use \eqref{eq:general_two_body_potential_bound_weaker} with
	\begin{equation*}
		\ell_1 = \ell_2 \simeq N^{-\alpha}, \quad \ell_3 \simeq N^{-\beta}, \quad \textmd{and} \quad \eta = N^{1 - 3\beta}.
	\end{equation*}
\end{remark}

Now, we collect some preliminary results that will be useful in the proof of Theorem~\ref{th:stability_second_kind}. The layer cake representation (see e.g. \cite[Theorem 1.13]{LieLos01}) of $V$ reads
\begin{equation*}
	V(x) = -\int_0^\ii\dd{t}\1(V(x) < -t).
\end{equation*}
Recall that $V$ is radially symmetric and decreasing with compact support. Thus, we can find a nonincreasing bounded function $t \mapsto R_t$ such that
\begin{equation}
	\label{eq:layer_cake}
	V(x) = -\int_0^\ii\dd{t}\1(\vert x\vert \leq R_t)
\end{equation}
(see Figure \ref{fig:layer_cake}). Moreover, the $L^p$ norm of $V$ can be expressed as
\begin{equation}
	\label{eq:layer_cake_Lp_norm}
	\Vert V\Vert_p^p = p\int_0^\ii\dd{t}t^{p-1}\left\vert\{x\in\bbR^3:V(x) < -t\}\right\vert = p\vert\bbB^3\vert\int_0^\ii\dd{t}t^{p-1}R_t^3,
\end{equation}
for any $p \geq 1$. Here, $\vert\bbB^3\vert$ denotes the volume of the unit ball in three dimensions.

\begin{figure}[ht]
	\centering
	\begin{tikzpicture}
	\begin{axis}
		[
		axis lines=middle,
		axis line style={thick},
		y label style={anchor=east},
		ylabel={\color{Blue}$V$},
		xmin=-1.2, xmax=1.2,
		ymin=-3, ymax=0.5,
		xtick=\empty,
		ytick=\empty,
		samples=400,
		domain=-1:1,
		width=10cm,
		height=6cm,
		]
		
		\draw[Red,very thick] (0.35,{-exp(-1/(1-0.35^2))/abs(0.35)}) -- (-0.35,{-exp(-1/(1-0.35^2))/abs(0.35)});
		
		\filldraw[Red] (0,{-exp(-1/(1-0.35^2))/abs(0.35)}) circle (2pt) node[Red,anchor=north east] {\large $-t$};
		
		\draw[Red,very thick,dashed] (0.35,{-exp(-1/(1-0.35^2))/abs(0.35)}) -- (0.35,0);
		\draw[Red,very thick,dashed] (-0.35,{-exp(-1/(1-0.35^2))/abs(0.35)}) -- (-0.35,0);
		
		\node[Red,anchor=south] at (0.35,0) {\large $R_t$};
		
		\draw[Red, very thick](-0.35,-0.05) -- (-0.35,0.05);
		\draw[Red, very thick](0.35,-0.05) -- (0.35,0.05);
		
		\draw[Blue, very thick](-1,-0.075) -- (-1,0.075);
		\draw[Blue, very thick](1,-0.075) -- (1,0.075);
		
		\addplot[very thick, Blue, domain=-0.999:-0.1, samples=200]
		{-exp(-1/(1-x^2))/abs(x)};
		\addplot[very thick, Blue, domain=0.1:0.999, samples=200]
		{-exp(-1/(1-x^2))/abs(x)};
		
	\end{axis}
\end{tikzpicture}
	\caption{Illustration of the layer cake representation of some nonpositive, radially symmetric nondecreasing, compactly supported potential $V$.}
	\label{fig:layer_cake}
\end{figure}

Since we will apply Proposition~\ref{prop:general_two_body_potential_bound} to the integrand in the right-hand side of \eqref{eq:layer_cake}, it is convenient to rewrite the estimates \eqref{eq:general_two_body_potential_bound}, \eqref{eq:general_two_body_potential_bound_weaker} and \eqref{eq:two_body_potential_more_singuilar} in terms of this indicator function. To this end, we define
\begin{equation*}
	\cK = \sum_{i = 1}^N-\Delta_i, \quad \cV_{N,t} = \sum_{1 \leq i<j\leq N}N^{3\alpha - 1}\1(\vert x_i - x_j\vert \leq R_tN^{-\alpha})
\end{equation*}
\begin{equation*}
	\tilde{\cV}_N = \sum_{1 \leq i<j\leq N}\1(\vert x_i - x_j\vert \leq \tilde{R}N^{-1/3}) \quad \textmd{and} \quad \cW_N = \sum_{1 \leq i<j<k\leq N}W_N(x_i - x_j,x_i - x_k).
\end{equation*}
Here, the two-body potential $\tilde{\cV}_N$ is defined for some $\tilde{R} > 0$ to be specified. Taking $\ell_2 = R_tN^{-\alpha}$ and $\ell_3 = RN^{-\beta}$, the bound \eqref{eq:general_two_body_potential_bound} becomes
\begin{equation}
	\label{eq:two_body_potential_bound_layer_cake}
	\cV_{N,t} \leq CR_t^3\ell_1^{-3} + CR_t^2\ell_1^3N^{4\beta - 1}\cK + C\ell_1^3N\cW_N,
\end{equation}
for any $\ell_1$ that satisfies
\begin{equation*}
	\ell_1 \geq R_tN^{-\alpha} \quad \textmd{and} \quad \ell_1 \geq RN^{-\beta}.
\end{equation*}
Likewise, the estimate \eqref{eq:general_two_body_potential_bound_weaker} reads
\begin{equation}
	\label{eq:two_body_potential_bound_layer_cake_weaker}
	\cV_{N,t} \leq CR_t^3\eta N^{3\beta} + CR_t^2\eta \ell_1^3N^{3\beta + \alpha - 1}\cK + C\eta^{-1}\ell_1^3N^{3\alpha -3\beta + 1}\cW_N,
\end{equation}
under the conditions
\begin{equation*}
	\ell_1 \geq R_tN^{-\alpha}, \quad \ell_1 \geq RN^{-\beta} \quad \textmd{and} \quad \eta \geq C.
\end{equation*}
Lastly, the estimate \eqref{eq:two_body_potential_more_singuilar} implies
\begin{equation}
	\label{eq:two_body_potential_singularisation_layer_cake}
	\cV_{N,t} \leq C\tilde{R}^{-3}\ell_1^3N^{3\alpha + 3\delta} + C\tilde{R}^{-3}\ell_1^3N^{3\alpha}\tilde{\cV}_N,
\end{equation}
for any $\ell_1$ satisfying
\begin{equation*}
	\ell_1 \geq R_tN^{-\alpha} \quad \textmd{and} \quad \ell_1 \geq \tilde{R}N^{-\delta}.
\end{equation*}

\subsection{Proof of Theorem \ref{th:stability_second_kind}: case $\beta \geq \alpha \geq 1/3$}

\label{sec:stability_second_kind_most_singular}

Let $\varepsilon > 0$ be fixed. We show that
\begin{equation}
	\label{eq:stability_second_kind_epsilon}
	\cV_N \leq C_\varepsilon N + \varepsilon\cK + \varepsilon\cW_N.
\end{equation}
To do so, we treat the following three cases separately:
\begin{enumerate}[label=(\alph*)]
	\item \label{eq:most_singular_case_1} $\varepsilon^{1/3}R_t^{1/2}N^{-1/3} \geq RN^{-\beta}$;
	\item \label{eq:most_singular_case_2} $RN^{-\beta} \geq R_tN^{-\alpha} \geq \varepsilon^{-2}N^{-3\beta}$;
	\item \label{eq:most_singular_case_3} $\varepsilon^{-2}N^{-3\beta} \geq R_tN^{-\alpha}$.
\end{enumerate}

On the one hand, in the cases \ref{eq:most_singular_case_1} and \ref{eq:most_singular_case_2}, we claim that
\begin{equation}
	\label{eq:general_two_body_potential_bound_singular_case}
	\cV_{N,t} \leq CR_t^{3/2}(C_\varepsilon N + \varepsilon \cK + \varepsilon \cW_N).
\end{equation}
Indeed, this bound follows from an application of \ref{eq:two_body_potential_bound_layer_cake} with
\begin{equation*}
	\ell_1 = \varepsilon^{1/3}R_t^{1/2}N^{-1/3}
\end{equation*}
when \ref{eq:most_singular_case_1} holds, and it instead follows from an application of \eqref{eq:two_body_potential_bound_layer_cake_weaker} with
\begin{equation*}
	\ell_1 = RN^{-\beta} \quad \textmd{and} \quad \eta = \dfrac{R^3}{\varepsilon R_t^{3/2}}N^{-3\beta + 1}
\end{equation*}
when \ref{eq:most_singular_case_2} holds. On the other hand, under the condition \ref{eq:most_singular_case_3}, the Sobolev inequality gives
\begin{equation*}
	\cV_{N,t} \leq \varepsilon CR_t^{3/2-\tau}\cK,
\end{equation*}
for any $\tau > 0$ small enough (depending on $\varepsilon$).

In summary, we have shown that, for any $\varepsilon > 0$ and $\tau >0$ small enough, the bound
\begin{equation}
	\label{eq:two_body_potential_bound_most_singular}
	\cV_{N,t} \leq CR_t^{3/2-\tau}(C_\varepsilon N + \varepsilon \cK + \varepsilon \cW_N),
\end{equation}
holds for any $t > 0$. Finally, the bound \eqref{eq:stability_second_kind_epsilon} follows from an integration over $t\in(0,\infty)$, as well as the identity \eqref{eq:layer_cake} and the fact that, for $\tau$ small enough, the integral of $R_t^{3/2 - \tau}$ is a constant. In fact, an application of Hölder's inequality gives
\begin{equation*}
	\int_0^\ii\d{t}R_t^{3/2 - \tau} \leq \Big(\int_0^\ii\d{t}(1 + t)^{1 + \eta}R_t^3\Big)^{\frac{1}{2} - \frac{\tau}{3}}\Big(\int_0^\ii\d{t}(1 + t)^{-\left(1 + \eta)(1 - \frac{4\tau}{3 + \tau}\right)}\Big)^{\frac{1}{2} + \frac{\tau}{3}},
\end{equation*}
which, in view of \eqref{eq:layer_cake_Lp_norm}, is a finite constant that depends only on $V$ for any fixed $\tau$ small enough. This proves \eqref{eq:stability_second_kind_epsilon}, which finishes the proof of Theorem \ref{th:stability_second_kind} when $\beta \geq \alpha \geq 1/3$.

\subsection{Proof of Theorem \ref{th:stability_second_kind}: case $\beta \geq 1/3 \geq \alpha$}

\label{sec:stability_second_kind_less_singular}

Let $\varepsilon > 0$ be fixed. Again, we show
\begin{equation}
	\label{eq:stability_second_kind_epsilon_less_singular}
	\cV_N \leq C_\varepsilon N + \varepsilon\cK + \varepsilon\cW_N.
\end{equation}
For this, we make a distinction between:
\begin{enumerate}[label=(\alph*)]
	\item \label{eq:less_singular_case_1} $R_tN^{-\alpha} \geq \varepsilon^{1/3}R_t^{1/2}N^{-1/3}$;
	\item \label{eq:less_singular_case_2} $\varepsilon^{1/3}R_t^{1/2}N^{-1/3} \geq R_tN^{-\alpha}$.
\end{enumerate}
To deal with the case \ref{eq:less_singular_case_2}, we proceed in exactly the same way as in Section~\ref{sec:stability_second_kind_most_singular}. This shows that
\begin{equation}
	\label{eq:two_body_potential_bound_less_singular}
	\cV_{N,t} \leq CR_t^{3/2-\tau}(C_\varepsilon N + \varepsilon \cK + \varepsilon \cW_N),
\end{equation}
holds here as well. In the case \ref{eq:less_singular_case_1}, we use \eqref{eq:two_body_potential_singularisation_layer_cake} with $\tilde{R} = \varepsilon R_t^{1/2}$ to deduce
\begin{equation*}
	\cV_{N,t} \leq C_\varepsilon R_t^{3/2}(N + \tilde{\cV}_N),
\end{equation*}
where $\tilde{\cV}_N$ is given by
\begin{equation*}
	\tilde{\cV}_N = \sum_{1 \leq i<j\leq N}\1(\vert x_i - x_j\vert \leq \varepsilon \sup_tR_t^{1/2}N^{-1/3}).
\end{equation*}
The assumption that $V$ has compact support implies that $\sup_tR_t^{1/2}$ is finite, which means that $\tilde{\cV}_N$ is a two-body potential with range of order $N^{-1/3}$. Therefore, we may apply \eqref{eq:stability_second_kind_epsilon} to $\tilde{\cV}_N$in order to deduce that \eqref{eq:two_body_potential_bound_less_singular} also holds under the condition \ref{eq:less_singular_case_1}.

Having established \eqref{eq:two_body_potential_bound_less_singular} for all $t > 0$, the rest of the proof is identical to that of Section~\ref{sec:stability_second_kind_most_singular}, which concludes the proof of Theorem~\ref{th:stability_second_kind} for $\beta \geq 1/3 \geq \alpha$.

\subsection{Proof of Theorem \ref{th:stability_second_kind}: case $1/3 \geq \beta \geq \alpha$}

\label{sec:stability_second_kind_least_singular}

Let $\varepsilon > 0$ be fixed. Just like in the previous two sections, we show
\begin{equation}
	\label{eq:stability_second_kind_epsilon_least_singular}
	\cV_N \leq C_\varepsilon N + \varepsilon\cK + \varepsilon\cW_N.
\end{equation}
Without loss of generality, we assume that $\alpha > 0$. In fact, for $\alpha = 0$, the bound \eqref{eq:stability_second_kind_epsilon_least_singular} can be shown to hold with $\cW_N = 0$ by splitting between $V \geq -\Lambda$ and $V \leq -\Lambda$, for some parameter $\Lambda > 0$, and using the Sobolev inequality in the second situation. In order to prove \eqref{eq:stability_second_kind_epsilon_least_singular}, we make the following distinction:
\begin{enumerate}[label=(\alph*)]
	\item \label{eq:least_singular_case_1} $R_tN^{-\alpha} \geq RN^{-\beta}$
	\item \label{eq:least_singular_case_2} $RN^{-\beta} \geq R_tN^{-\alpha} \geq \varepsilon^{-2}N^{-3\beta}$
	\item \label{eq:least_singular_case_3} $\varepsilon^{-2}N^{-3\beta} \geq R_tN^{-\alpha}$
\end{enumerate}
In the case \ref{eq:least_singular_case_1}, we apply \ref{eq:two_body_potential_bound_layer_cake_weaker} with
\begin{equation*}
	\ell_1 = R_tN^{-\alpha} \quad \textmd{and} \quad \eta = \dfrac{1}{\varepsilon}N^{1-3\beta}
\end{equation*}
to deduce
\begin{equation*}
	\cV_{N,t} \leq CR_t^3(C_\varepsilon N + \varepsilon^{-1}R_t^2N^{-2\alpha}\cK + \varepsilon\cW_N) \leq CR_t^3(C_\varepsilon N + \varepsilon \cK + \varepsilon\cW_N),
\end{equation*}
for $N$ large enough. In the case \ref{eq:least_singular_case_2}, we use \eqref{eq:two_body_potential_bound_layer_cake_weaker} with
\begin{equation*}
	\ell_1 = RN^{-\beta} \quad \textmd{and} \quad \eta = \dfrac{1}{\varepsilon}\dfrac{R^{3/2}}{R_t^{3/2}}N^{1 - 3\beta}
\end{equation*}
to get
\begin{equation*}
	\cV_{N,t} \leq CR_t^{3/2}(C_\varepsilon N + \varepsilon\cK + \varepsilon\cW_N).
\end{equation*}
Finally, the case \ref{eq:least_singular_case_3} is the same as the case \ref{eq:most_singular_case_3} from Section~\ref{sec:stability_second_kind_most_singular}, which is dealt with using the Sobolev inequality. This proves that
\begin{equation}
	\label{eq:two_body_potential_bound_least_singular}
	\cV_{N,t} \leq CR_t^{3/2-\tau}(C_\varepsilon N + \varepsilon \cK + \varepsilon \cW_N),
\end{equation}
is true for any $t > 0$. Therefore, we may conclude the proof just as in Section~\ref{sec:stability_second_kind_most_singular}, which proves Theorem~\ref{th:stability_second_kind} for $1/3\geq \beta \geq \alpha$.

\section{Proof of Theorem \ref{th:classical_stability}}

\label{sec:classical_stability_proof}

The proof of Theorem~\ref{th:classical_stability} occurs in two steps. First, we show that $V_N$ is bounded from below by a two-body potential with range of order $N^{-\beta}$. Then, we use an elementary estimate to show that this more singular two-body potential is controlled by the three-body potential $W_N$. This is done in Sections~\ref{sec:classical_stability_singularisation}~and~\ref{sec:stability_second_kind_proof_subsec}, respectively.

\subsection{Stabilisation by adding a soft-core and singularisation}

\label{sec:classical_stability_singularisation}

The main technical ingredient in the proof of Theorem~\ref{th:classical_stability} is the following result, which shows that the potential $V$ can be made classically stable by adding a sufficiently strong soft-core. This was proved by one of the authors in \cite[Lemma 21]{Tri18}, borrowing ideas from \cite[Theorem IV]{FisherRue66}; for the reader's convenience, we recall the proof in Appendix~\ref{app:stabilisation_soft_core}.

\begin{proposition}[Stabilisation by adding a soft-core]
	\label{prop:stabilisation_soft_core}
	Let $V$ satisfy Assumption~\ref{ass:classical_stability}, and let $R_0 > 0$. Then, there exists a nonnegative constant $C$ such that
	\begin{equation*}
		\sum_{1 \leq i < j\leq N}V(x_i - x_j) \geq -C\sum_{1 \leq i<j\leq N}\1(\vert x_i - x_j\vert \leq R_0) - CN,
	\end{equation*}
	for all $N \geq 1$ and $x_1,\dots,x_N\in\bbR^3$.
\end{proposition}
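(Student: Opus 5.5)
We will prove the claim with a Fisher--Ruelle type nearest-neighbour argument. Only the negative part of $V$ matters, so we assume $V\le 0$. By Assumption~\ref{ass:classical_stability} we then have $-V \le \Lambda\,\1(|x|\le R_0) + \phi(|x|)$ with $\phi(r) = C\min(1, r^{-3-\varepsilon})\1(r > R_0)$, where $\Lambda = \sup V_-$. The short-range part is already of the required form $-\Lambda\sum_{i<j}\1(|x_i-x_j|\le R_0)$. It therefore suffices to show
\begin{equation*}
	\sum_{i<j}\phi(|x_i - x_j|) \le C\sum_{i<j}\1(|x_i-x_j|\le R_0) + CN.
\end{equation*}

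\textbf{Step 1: reduce to a well-separated set.} Use a greedy procedure to extract a maximal subset $S$ of the points that are pairwise at distance $> R_0/2$. Every point lies within $R_0/2$ of some point of $S$, so we can assign each $x_i$ to a representative $s(i)\in S$ with $|x_i - s(i)|\le R_0/2$. Let $n_s$ be the number of points assigned to $s$. Every pair assigned to the same $s$ is at distance $\le R_0$, so
\begin{equation*}
	\sum_s \binom{n_s}{2}\le \sum_{i<j}\1(|x_i-x_j|\le R_0).
\end{equation*}

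\textbf{Step 2: estimate the tail.} Take a pair $i,j$ with $|x_i - x_j| > R_0$. If $s(i)\neq s(j)$, then $|s(i)-s(j)| \ge |x_i-x_j| - R_0$, which does not by itself give comparability, so we split into two cases.
\begin{itemize}
	\item If $|x_i - x_j|\ge 2R_0$, then $|s(i)-s(j)|\ge |x_i-x_j|/2$, and hence $\phi(|x_i-x_j|)\le C'\phi_2(|s(i)-s(j)|)$ with $\phi_2(r) = \min(1,r^{-3-\varepsilon})$.
	\item If $R_0<|x_i-x_j|<2R_0$, then $\phi$ is bounded, and $s(i),s(j)$ lie within $3R_0$ of each other.
\end{itemize}
In both cases the total contribution is bounded by $C\sum_{s, s'\in S} n_s n_{s'}\,\psi(|s-s'|)$, where $\psi(r) = \min(1, r^{-3-\varepsilon})$ and the sum includes $s=s'$.

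\textbf{Step 3: key estimate (main obstacle).} We bound the right-hand side with $n_sn_{s'}\le \tfrac12(n_s^2+n_{s'}^2)$, which gives
\begin{equation*}
	\sum_{s,s'} n_s n_{s'}\psi(|s-s'|)\le \sum_s n_s^2 \sum_{s'\in S}\psi(|s-s'|).
\end{equation*}
Since the points of $S$ are $R_0/2$-separated, a packing argument (count the points of $S$ in the dyadic shells $2^k\le |s-s'|<2^{k+1}$; shell $k$ contains at most $C2^{3k}$ of them) together with the integrability of $r^{-3-\varepsilon}$ at infinity shows $\sup_s\sum_{s'}\psi(|s-s'|)\le C$. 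This is where $\varepsilon>0$ is used, and it is the crux of the argument. It follows that the tail is at most $C\sum_s n_s^2$.

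\textbf{Step 4: conclude.} Finally,
\begin{equation*}
	n_s^2 = 2\binom{n_s}{2} + n_s \quad\text{and}\quad \sum_s n_s = N,
\end{equation*}
so $C\sum_s n_s^2\le 2C\sum_s\binom{n_s}{2} + CN$. By Step 1 this is at most $2C\sum_{i<j}\1(|x_i-x_j|\le R_0) + CN$, which is the claimed bound. The only delicate points are the case distinction in Step 2 and the uniform packing bound in Step 3. The rest is bookkeeping of constants depending on $R_0$, $\varepsilon$ and $\sup V_-$.
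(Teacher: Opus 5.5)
Your argument is correct, and it takes a genuinely different route from the paper.

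\textbf{The two routes.} The paper, following Fisher--Ruelle and \cite[Lemma 21]{Tri18}, works purely at the level of the pair potential. It builds a bounded function $\phi=\phi_2-2\phi_1$, where $\phi_1$ is a mollification of $C_0|x|^{-3-\varepsilon}\1(|x|\geq R_0/2)$ and $\phi_2$ is a difference of two Yukawa potentials. This $\phi$ satisfies $\phi\leq V+C\1(|\cdot|\leq R_0)$ pointwise and $\hat\phi\geq 0$. Positive-definiteness then gives $\sum_{i<j}\phi(x_i-x_j)\geq -\tfrac{N}{2}\phi(0)$ for every configuration.

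You argue configuration by configuration in real space instead. Your ingredients are a maximal $R_0/2$-separated net, clustering of the points around it, a packing bound for $\sum_{s'}\min(1,|s-s'|^{-3-\varepsilon})$, and the identity $n_s^2=2\binom{n_s}{2}+n_s$.

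\textbf{What each buys.} Your route is elementary and avoids the Fourier-side estimates on $\hat\phi_1$ and $\hat\phi_2$, including the balancing $c_\varepsilon\lambda^2\leq\mu\leq C_\varepsilon\lambda^{2+\varepsilon}$. It also gives explicit constants and carries over verbatim to dimension $d$ with decay $|x|^{-d-\varepsilon}$.

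The paper's route buys a pointwise domination of $V+C\1(|\cdot|\leq R_0)$ by a function of positive type. This matches the Fourier-positivity argument used right afterwards in Corollary~\ref{cor:singularisation}. However, $\1(|x|\leq R_0)\leq e^{R_0^2/2}e^{-|x|^2/2}$, so the configuration-wise inequality you prove is all that corollary actually needs.

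\textbf{A step to fix.} In the first bullet of Step 2, the inequality $|s(i)-s(j)|\geq |x_i-x_j|/2$ points the wrong way. Since $\psi$ is nonincreasing, it only yields an upper bound on $\psi(|s(i)-s(j)|)$, whereas you need a lower bound. The correct and simpler observation is that, for every pair with $|x_i-x_j|>R_0$, the triangle inequality gives $|s(i)-s(j)|\leq |x_i-x_j|+R_0<2|x_i-x_j|$. Since $\psi(2r)\geq 2^{-3-\varepsilon}\min(1,r^{-3-\varepsilon})$, this yields $\phi(|x_i-x_j|)\leq C\,2^{3+\varepsilon}\psi(|s(i)-s(j)|)$. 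This covers both of your cases at once, so the case distinction can be dropped. With that correction, the rest of your argument goes through as written.
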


As an important consequence of Proposition~\ref{prop:stabilisation_soft_core}, we have the following corollary. Notably, this allows us to bound the two-body potential $V_N$ from below by a more singular compactly supported two-body potential $N^{3\beta - 1}\tilde{V}(N^\beta\cdot)$, which is particularly useful to compare with the three-body potential $W_N$.

\begin{corollary}[Singularisation]
	\label{cor:singularisation}
	Let $V$ satisfy Assumption~\ref{ass:classical_stability}, and, given $0 \leq \alpha \leq 1/3$, define $V_N$ as in \eqref{eq:potentials_scaling}. Let $\delta$ satisfy $\alpha \leq \delta \leq 1/3$ and let $R_0 > 0$. Then, there exists a nonnegative constant $C$ such that
	\begin{equation}
		\label{eq:soft_two_body_potential_singularisation}
		\sum_{1\leq i<j\leq N}V_N(x_i - x_j) \geq -\sum_{1 \leq i<j\leq N}CN^{3\delta - 1}\1(N^\delta\vert x_i - x_j\vert \leq R_0) - CN.
	\end{equation}
\end{corollary}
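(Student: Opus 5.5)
The plan is to apply Proposition~\ref{prop:stabilisation_soft_core} at the natural scale of $V_N$, namely $N^{-\alpha}$, and then to transfer the resulting soft-core from range $N^{-\alpha}$ to the shorter range $N^{-\delta}$ by a classical, purely combinatorial version of the bound \eqref{eq:two_body_potential_more_singuilar}.

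\emph{Step 1 (rescaling).} Set $y_i = N^\alpha x_i$ and apply Proposition~\ref{prop:stabilisation_soft_core} to $y_1,\dots,y_N$ with the given $R_0$. Since $V_N(x_i-x_j) = N^{3\alpha-1}V(y_i-y_j)$, multiplying by $N^{3\alpha-1}$ gives
\begin{equation*}
\sum_{i<j}V_N(x_i-x_j) \geq -CN^{3\alpha-1}\sum_{i<j}\1(N^\alpha\vert x_i-x_j\vert\leq R_0) - CN^{3\alpha}.
\end{equation*}
Because $\alpha\leq 1/3$, we have $N^{3\alpha}\leq N$, so the last term is already of the required form.

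\emph{Step 2 (classical singularisation).} Let $\ell_2 = R_0N^{-\alpha}$ and $\ell_3 = R_0N^{-\delta}$, so that $\ell_2\geq\ell_3$. Denote by $P(\ell)$ the number of pairs $i<j$ with $\vert x_i-x_j\vert\leq \ell$. The goal is to show
\begin{equation*}
P(\ell_2)\leq C(\ell_2/\ell_3)^3\big(P(\ell_3)+N\big)
\end{equation*}
with a universal constant $C$. To do this:
\begin{enumerate}
\item Tile $\bbR^3$ by cubes $Q$ of side $\ell_3/\sqrt3$, so that each cube has diameter $\ell_3$, and let $n_Q$ be the number of points in $Q$.
\item Any two points in the same cube are within distance $\ell_3$, hence $\sum_Q n_Q(n_Q-1)/2\leq P(\ell_3)$.
\item Call $Q\sim Q'$ if the two cubes are at distance at most $\ell_2$. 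Each cube has at most $M\leq C(1+\ell_2/\ell_3)^3\leq C(\ell_2/\ell_3)^3$ neighbours.
\item Every pair counted in $P(\ell_2)$ lies in some pair of neighbouring cubes. Using $n_Qn_{Q'}\leq (n_Q^2+n_{Q'}^2)/2$, we get
\begin{equation*}
P(\ell_2)\leq\sum_{Q\sim Q'}n_Qn_{Q'}\leq M\sum_Q n_Q^2 = M\sum_Q\big(n_Q(n_Q-1)+n_Q\big)\leq M\big(2P(\ell_3)+N\big).
\end{equation*}
\end{enumerate}

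\emph{Step 3 (conclusion).} Insert Step 2 into Step 1, using $(\ell_2/\ell_3)^3 = N^{3(\delta-\alpha)}$. The two-body term becomes
\begin{equation*}
CN^{3\alpha-1}N^{3(\delta-\alpha)}P(\ell_3) = CN^{3\delta-1}\sum_{i<j}\1(N^\delta\vert x_i-x_j\vert\leq R_0).
\end{equation*}
The error term is $CN^{3\alpha-1}N^{3(\delta-\alpha)}N = CN^{3\delta}\leq CN$, since $\delta\leq 1/3$. Together these give \eqref{eq:soft_two_body_potential_singularisation}.

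The only real content beyond Proposition~\ref{prop:stabilisation_soft_core} is the counting in Step 2. The point to check carefully is that the neighbour count $M$ and the resulting constant are independent of $N$, which holds because the geometry is scale invariant and depends only on the ratio $\ell_2/\ell_3\geq 1$. Both restrictions $\alpha\leq1/3$ and $\delta\leq 1/3$ enter exactly to absorb the additive errors into $CN$.
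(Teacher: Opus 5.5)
Your proof is correct, but it moves the soft-core from scale $N^{-\alpha}$ to scale $N^{-\delta}$ differently from the paper.

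The paper never counts points in cubes. It replaces the indicator by the Gaussian $\tilde V(x)=e^{-|x|^2/2}$, which dominates $e^{-R_0^2/2}\1(|x|\le R_0)$. It then uses that $\widehat{\tilde V}$ is positive and radially decreasing, so $N^{3\delta-1}\tilde V(N^\delta\cdot)-N^{3\alpha-1}\tilde V(N^\alpha\cdot)$ has nonnegative Fourier transform. This difference is therefore classically stable, with error at most $\frac N2 N^{3\delta-1}\le N$. Finally, the paper applies Proposition~\ref{prop:stabilisation_soft_core} a second time, to $-\tilde V$, to turn the Gaussian back into an indicator at scale $N^{-\delta}$.

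Your Step~2 is instead a direct, pointwise proof of the classical content of \eqref{eq:two_body_potential_more_singuilar} with $\ell_1=\ell_2$. The paper proves that bound in Section~5 but does not use it here. Your tiling argument is simpler than the paper's proof of it: you need no occupation threshold and no appeal to \cite[Lemma 4.5]{TriayVis26}. This works because you compare a quadratic quantity with a quadratic quantity, and the diagonal terms $n_Q$ produce the additive $N$ error directly.

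Your route gives an elementary, self-contained argument. It has an explicit universal constant in $P(\ell_2)\le C(\ell_2/\ell_3)^3\bigl(P(\ell_3)+N\bigr)$ and uses the Proposition only once. The paper's route is shorter given the Proposition and positive-definiteness. However, it relies on the Fourier monotonicity of a specific profile, and its constant is non-explicit because of the second application of the Proposition.
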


\begin{proof}[Proof of Corollary~\ref{cor:singularisation}]
	Define $\tilde{V}:x\in\bbR^3\mapsto e^{-\vert x\vert^2/2}$ and denote
	\begin{equation*}
		\tilde{V}_N^{(\alpha)} = N^{3\alpha - 1}\tilde{V}(N^{\alpha}\,\cdot\,) \quad \textmd{and} \quad \tilde{V}_N^{(\delta)} = N^{3\delta - 1}\tilde{V}(N^{\delta}\,\cdot\,).
	\end{equation*}
	Notice first that, thanks to Proposition~\ref{prop:stabilisation_soft_core}, there exists a nonnegative constant $C$ such that $V + C\tilde{V}$ is classically stable, which implies
	\begin{equation*}
		\sum_{1 \leq i<j\leq N}V_N(x_i - x_j) \geq -\sum_{1 \leq i<j\leq N}C\tilde{V}_N^{(\alpha)}(x_i - x_j) - CN^{3\alpha},
	\end{equation*}
	for all $x_1,\dots,x_N\in\bbR^3$. Next, we use that the Fourier transform of $\tilde{V}$ is positive, radially symmetric and decreasing (see e.g. \cite[Theorem 5.2]{LieLos01}) to deduce that $\tilde{V}_N^{(\delta)} - V_N^{(\alpha)}$ has a positive Fourier transform. However, this implies that this potential is classically stable, which yields
	\begin{equation*}
		\sum_{1 \leq i<j\leq N}V_N(x_i - x_j) \geq -\sum_{1 \leq i<j\leq N}C\tilde{V}_N^{(\delta)}(x_i - x_j) - CN,
	\end{equation*}
	for all $x_1,\dots,x_N\in\bbR^3$. Finally, we apply Proposition~\ref{prop:stabilisation_soft_core} to $-\tilde{V}$ to deduce \eqref{eq:soft_two_body_potential_singularisation}.
\end{proof}

\subsection{Proof of Theorem~\ref{th:classical_stability}}

\label{sec:stability_second_kind_proof_subsec}

Let $R_0 > 0$ to be chosen later. An application of \eqref{eq:soft_two_body_potential_singularisation} with $\delta = \beta$ gives
\begin{equation*}
	\sum_{1 \leq i<j\leq N}V_N(x_i - x_j) \geq - \sum_{1\leq i<j\leq N}CN^{3\beta - 1}\1(\vert x_i - x_j\vert \leq R_0N^{-\beta}) - CN.
\end{equation*}
Then, we use the Cauchy--Schwarz inequality to find, for all $\eta > 0$,
\begin{multline*}
	\sum_{\substack{1\leq i,j\leq N\\ i\neq j}}N^{3\beta - 1}\1(\vert x_i - x_j\vert \leq R_0N^{-\beta}) \leq \eta^{-1}N + \eta\sum_{i = 1}^N\bigg(\sum_{\substack{j = 1\\ j\neq 1}}^NN^{3\beta - 1}\1(\vert x_i - x_j\vert \leq R_0N^{-\beta})\bigg)^2\\
	\begin{aligned}[t]
		&= \eta^{-1}N + \eta\sum_{\substack{1\leq i,j,k\leq N\\ i\neq i\neq j\neq k\neq i}}N^{6\beta - 2}\1(\vert x_i - x_j\vert \leq R_0N^{-\beta})\1(\vert x_i - x_k\vert \leq R_0N^{-\beta})\\
		&\phantom{=} + \eta\sum_{\substack{1 \leq i,j\leq N\\ i\neq j}}N^{6\beta - 2}\1(\vert x_i - x_j\vert \leq R_0N^{-\beta}).
	\end{aligned}
\end{multline*}
Since $\beta \leq 1/3$, we may move the last term to the left-hand side, which yields a prefactor $1 - \eta N^{3\beta - 1}$. Taking $R_0$ small enough and using \eqref{eq:three_body_potential_bounded_below_assumption}, we can bound the second term from above by $W_N$. Thus, the estimate \eqref{eq:classical_stability} follows from an appropriate choice of $\eta$. This finishes the proof of Theorem~\ref{th:classical_stability}.

\section{Local exclusion principle: proof of Proposition~\ref{prop:general_two_body_potential_bound}}

\label{sec:proof_strategy_main_estimate}

This section is entirely devoted to the proof of Proposition~\ref{prop:general_two_body_potential_bound}. Since the estimate \eqref{eq:general_two_body_potential_bound} was already proved by the authors in \cite[Proposition 4.1]{TriayVis26}, we only need to prove \eqref{eq:general_two_body_potential_bound_weaker} and \eqref{eq:two_body_potential_more_singuilar}, which we do in that order. For the reader's convenience, we also explain how the proof of \eqref{eq:general_two_body_potential_bound_weaker} must be modified in order to prove \eqref{eq:general_two_body_potential_bound}.

\begin{proof}[Proof of \eqref{eq:general_two_body_potential_bound_weaker} from Proposition~\ref{prop:general_two_body_potential_bound}]
	The core idea is to localise the system into boxes of a well-chosen size and make a distinction according to the number of particles contained in each of these boxes. On the one hand, in boxes that contain many particles, we use \eqref{eq:three_body_potential_bounded_below_assumption} to argue that the number of three-body collisions scales cubically with the number of particles, and thus controls the number of two-body collisions, which scales quadratically. On the other hand, in boxes that contain few particles, we use the Poincaré--Sobolev inequality to show that the two-body interactions are controlled by the kinetic energy and the particle number $N$.
	\\
	
	\begin{figure}[ht]
		\centering
		\begin{tikzpicture}
	\tikzmath{\w = 6; \se = \w/3;}
	
	\draw[gray, thick] (-\w/2,-\w/2 + \se) -- (\w/2,-\w/2 + \se);
	\draw[gray, thick] (-\w/2,-\w/2 + 2*\se) -- (\w/2,-\w/2 + 2*\se);
	\draw[gray, thick] (-\w/2 + \se,-\w/2) -- (-\w/2 + \se,\w/2);
	\draw[gray, thick] (-\w/2 + 2*\se,-\w/2) -- (-\w/2 + 2*\se,\w/2);
	
	\draw[Red, ultra thick] (-\w/2,-\w/2) -- (-\w/2,\w/2) -- (\w/2,\w/2) -- (\w/2,-\w/2) -- cycle;
	
	\draw[Blue, ultra thick] (-\w/2 + \se,-\w/2 + \se) -- (-\w/2 + \se,-\w/2 + 2*\se) -- (-\w/2 + 2*\se,-\w/2 + 2*\se) -- (-\w/2 + 2*\se,-\w/2 + \se) -- cycle;
	
	\filldraw[black] (0,0) circle (2pt) node[anchor=north east]{\large $r$};
	\draw[color=black, very thick](-\w/8,\w/8) circle (\w/4);
	\node[color=black](a) at ({-\w/8 - \w/(2*4*sqrt(2))},{\w/8- \w/(2*4*sqrt(2))}) {\large $\ell_2$};
	\draw[color=black, very thick,dotted](-\w/8,\w/8) -- (a);
	\draw[color=black, very thick,dotted](a) -- ({-\w/8 - \w/(4*sqrt(2))},{\w/8- \w/(4*sqrt(2))});
	
	\filldraw[black] (-\w/8,\w/8) circle (2pt) node[anchor=north west]{\large $x_i$};
	\filldraw[black] (-\w/4-0.1,\w/4) circle (2pt) node[anchor=north west]{\large $x_j$};
	
	\node[color=Red](a) at (-\w/2 - 0.5,0) {\large $3\ell_1$};
	\draw[color=Red, ultra thick, dashed,<-](-\w/2 - 0.5,\w/2) -- (a);
	\draw[color=Red, ultra thick, dashed,->](a) -- (-\w/2 - 0.5,-\w/2);
	\node[color=Blue](b) at (\w/2 + 0.5,0) {\large $\ell_1$};
	\draw[color=Blue, ultra thick, dashed,<-](\w/2 + 0.5,\se/2) -- (b);
	\draw[color=Blue, ultra thick, dashed,->](b) -- (\w/2 + 0.5,-\se/2);
\end{tikzpicture}
		\caption{Illustration of the boxes ${\color{Blue}\Lambda_r^{(\ell_1)}}$ and ${\color{Red}\Lambda_r^{(3\ell_1)}}$. Because the interaction has range $\ell_2$, which is less than $\ell_1$, particles located in the box ${\color{Blue}\Lambda_r^{(\ell_1)}}$ can only interact with particles located in the box ${\color{Red}\Lambda_r^{(3\ell_1)}}$.}
		\label{fig:box_decomposition}
	\end{figure}
	
	\noindent
	\textbf{Step 1: Localisation into boxes.} Let $\Lambda_1 = [-1/2,1/2)^3$ and define
	\begin{equation}
		\label{eq:box_def}
		\Lambda_r^{(\ell)} = r + \ell\Lambda_1,
	\end{equation}
	the box of centre $r$ and of side length $\ell$. To bound the left-hand side of \eqref{eq:general_two_body_potential_bound_weaker}, we begin by decomposing the space $\bbR^3$ into boxes of side length $\ell_1$. Namely, we decompose
	\begin{equation*}
		\1(\vert x_i - x_j\vert\leq \ell_2) = \sum_{r\in\ell_1\bbZ^3}\1(\vert x_i - x_j\vert \leq \ell_2)\1_{\Lambda_r^{(\ell_1)}}(x_i).
	\end{equation*}
	Using that, if $x_i$ is in $\Lambda_r^{(\ell_1)}$, then $x_j$ must belong to $\Lambda_r^{(3\ell_1)}$ in order to be at a distance less than $\ell_2$ from $x_i$, we can bound
	\begin{equation*}
		\1(\vert x_i - x_j\vert\leq \ell_2) \leq \sum_{r\in\ell_1\bbZ^3}\1(\vert x_i - x_j\vert \leq \ell_2)\1_{\Lambda_r^{(3\ell_1)}}(x_i)\1_{\Lambda_r^{(3\ell_1)}}(x_j)
	\end{equation*}
	(see Figure \ref{fig:box_decomposition}). Note that we also used the trivial inclusion $\Lambda_r^{(\ell_1)} \subset \Lambda_r^{(3\ell_1)}$. After that, we make a distinction according to the number of particles contained in each box. To this end, define
	\begin{equation*}
		N_r^{(\ell)} = \sum_{i = 1}^N\1_{\Lambda_r^{(\ell)}}(x_i),
	\end{equation*}
	the number of particles contained in the box $\Lambda_r^{(\ell)}$. Though this depends on $(x_1,\dots,x_N)\in\bbR^{3N}$, we omit this from the notation for the sake of readability. Define also
	\begin{equation*}
		\eta_{\ell_1,\ell_3} = \Big(\dfrac{\ell_1}{\ell_3}\Big)^3\eta.
	\end{equation*}
	The distinction we shall make is between $N_r^{(3\ell_1)} \geq \eta_{\ell_1,\ell_3}$ and $N_r^{(3\ell_1)} < \eta_{\ell_1,\ell_3}$. To keep the notation simple, we use the Iverson bracket $[.]$, defined by
	\begin{equation*}
		[A] = \left\{
		\begin{array}{ll}
			1 & \textmd{if $A$ is true,}\\
			0 & \textmd{otherwise.}
		\end{array}
		\right.
	\end{equation*}
	Namely, we write $[N_r^{(3\ell_1)} \geq \eta_{\ell_1,\ell_3}]$ instead of $\1(N_r^{(3\ell_1)} \geq \eta_{\ell_1,\ell_3})$. In summary, we bound
	\begin{multline}
		\label{eq:general_two_body_potential_bound_high_and_low_occupancy}
		\sum_{1\leq i< j\leq N}\1(\vert x_i - x_j\vert\leq \ell_2) \leq \sum_{r\in\ell_1\bbZ^3}[N_r^{(3\ell_1)} \geq \eta_{\ell_1,\ell_3}](N_r^{(3\ell_1)})^2\\
		+ \sum_{r\in\ell_1\bbZ^3}[N_r^{(3\ell_1)} < \eta_{\ell_1,\ell_3}]\sum_{1\leq i < j \leq N}\1(\vert x_i - x_j\vert \leq \ell_2)\1_{\Lambda_r^{(3\ell_1)}}(x_i)\1_{\Lambda_r^{(3\ell_1)}}(x_j).
	\end{multline}
	\\
	
	\noindent
	\textbf{Step 2: Boxes with many particles.} We start by bounding the first term in the right-hand side of \eqref{eq:general_two_body_potential_bound_high_and_low_occupancy} by
	\begin{equation*}
		\sum_{r \in \ell_1\bbZ^3}[N_r^{(3\ell_1)} \geq \eta_{\ell_1,\ell_3}](N_r^{(3\ell_1)})^2 \leq C\eta^{-1}\Big(\dfrac{\ell_1}{\ell_3}\Big)^{-3}\sum_{r \in \ell_1\bbZ^3}[N_r^{(3\ell_1)} \geq \eta_{\ell_1,\ell_3}](N_r^{(3\ell_1)})^3.
	\end{equation*}
	Next, we use \cite[Lemma 4.5]{TriayVis26} to deduce
	\begin{equation}
		\label{eq:convexity_bound_number_particles_box}
		\sum_{r \in \ell_1\bbZ^3}[N_r^{(3\ell_1)} \geq \eta_{\ell_1,\ell_3}](N_r^{(3\ell_1)})^3 \leq C\Big(\frac{\ell_1}{\ell_3}\Big)^6\sum_{s\in\tilde{\ell}_3\mathbb{Z}^3}[N_s^{(\tilde{\ell}_3)} \geq 3](N_s^{(\tilde{\ell}_3)})^3,
	\end{equation}
	where we defined $\tilde{\ell}_3 = \ell_3/\sqrt{6}$. The idea behind \eqref{eq:convexity_bound_number_particles_box} is explained after the present proof. Since $\tilde{\ell}_3$ was chosen such that any three $x_i,x_j,x_k\in\Lambda_s^{(\tilde{\ell}_3)}$ must satisfy $\vert (x_i - x_j,x_i - x_k)\vert \leq \ell_3$, the previous bound implies
	\begin{equation*}
		\sum_{r\in\ell_1\bbZ^3}[N_r^{(3\ell_1)} \geq \eta_{\ell_1,\ell_3}](N_r^{(3\ell_1)})^2 \leq C\eta^{-1}\Big(\dfrac{\ell_1}{\ell_3}\Big)^3\sum_{1\leq i < j < k\leq N}\1(\vert (x_i - x_j,x_i - x_k)\vert \leq \ell_3).
	\end{equation*}
	\\
	
	\noindent
	\textbf{Step 3: Boxes with few particles.} To bound the last term in \eqref{eq:general_two_body_potential_bound_high_and_low_occupancy}, we wish to use the Sobolev inequality. Applying it directly gives
	\begin{multline}
		\label{eq:sobolev_inequality_application}
		\sum_{r\in\ell_1\bbZ^3}[N_r^{(3\ell_1)} < \eta_{\ell_1,\ell_3}]\sum_{1\leq i < j \leq N}\1(\vert x_i - x_j\vert \leq \ell_2)\1_{\Lambda_r^{(3\ell_1)}}(x_i)\1_{\Lambda_r^{(3\ell_1)}}(x_j) \leq C\eta\ell_2^{2}\Big(\dfrac{\ell_1}{\ell_3}\Big)^3\sum_{i = 1}^N-\Delta_i\\
		+ \textmd{boundary term}.
	\end{multline}
	Although the factor in front of the kinetic energy has the desired form, the error that we get due to the boundary of the box $\Lambda_r^{(3\ell_1)}$ is too large, and we therefore need to be more subtle. To this end, we define the projections
	\begin{equation*}
		p_r^{(\ell)} = \dfrac{1}{\ell^3}\vert\1_{\Lambda_r^{(\ell)}}\rangle\langle\1_{\Lambda_r^{(\ell)}}\vert \quad \textmd{and} \quad q_r^{(\ell)} = \1_{\Lambda_r^{(\ell)}} - p_r^{(\ell)}.
	\end{equation*}
	Then, we use the resolution of the identity $\1_{\Lambda_r^{(3\ell_1)}} = p_r^{(3\ell_1)} + q_r^{(3\ell_1)}$ and the Cauchy--Schwarz inequality to deduce
	\begin{equation*}
		[N_r^{(3\ell_1)} < \eta_{\ell_1,\ell_3}]\sum_{1\leq i < j \leq N}\1(\vert x_i - x_j\vert \leq \ell_2)\1_{\Lambda_r^{(3\ell_1)}}(x_i)\1_{\Lambda_r^{(3\ell_1)}}(x_j) \leq \cJ_{1,r} + \cJ_{2,r}
	\end{equation*}
	with $\cJ_{1,r}$ and $\cJ_{2,r}$ given by
	\begin{equation*}
		\cJ_{1,r} = 2\sum_{1\leq i < j \leq N}q_{r,i}^{(3\ell_1)}\1(\vert x_i - x_j\vert \leq \ell_2)[N_r^{(3\ell_1)} < \eta_{\ell_1,\ell_3}]q_{r,i}^{(3\ell_1)}\1_{\Lambda_r^{(3\ell_1)}}(x_j)
	\end{equation*}
	and
	\begin{equation*}
		\cJ_{2,r} = 2\sum_{1\leq i < j \leq N}p_{r,i}^{(3\ell_1)}\1(\vert x_i - x_j\vert \leq \ell_2)[N_r^{(3\ell_1)} < \eta_{\ell_1,\ell_3}]p_{r,i}^{(3\ell_1)}\1_{\Lambda_r^{(3\ell_1)}}(x_j)
	\end{equation*}
	where $p_{r,i}^{(3\ell_1)}$ and $q_{r,i}^{(3\ell_1)}$ denote the projections acting on the $i$-th variable.
	\\
	
	\noindent
	\textit{Analysis of $\cJ_{1,r}$.} Define
	\begin{equation*}
		N_{r,i}^{(\ell)} = \sum_{\substack{j = 1\\ j\neq i}}^N\1_{\Lambda_r^{(\ell)}}(x_j).
	\end{equation*}
	Using that this operator commutes with $q_{r,i}^{(3\ell_1)}$, we get
	\begin{equation*}
		q_{r,i}^{(3\ell_1)}\1(\vert x_i - x_j\vert \leq \ell_2)[N_r^{(3\ell_1)} < \eta_{\ell_1,\ell_3}]q_{r,i}^{(3\ell_1)} \leq (q_{r,i}^{(3\ell_1)}\1(\vert x_i - x_j\vert \leq \ell_2)q_{r,i}^{(3\ell_1)})[N_{r,i}^{(3\ell_1)} < \eta_{\ell_1,\ell_3}].
	\end{equation*}
	Hence, an application of the Hölder inequality and of the Poincaré--Sobolev inequality (see e.g. \cite[Theorem 8.12]{LieLos01}) gives
	\begin{equation*}
		q_{r,i}^{(3\ell_1)}\1(\vert x_i - x_j\vert \leq \ell_2)[N_r^{(3\ell_1)} < \eta_{\ell_1,\ell_3}]q_{r,i}^{(3\ell_1)} \leq C\ell_2^2\nabla_i^*\1_{\Lambda_r^{(3\ell_1)}}(x_i)\nabla_i[N_{r,i}^{(3\ell_1)} < \eta_{\ell_1,\ell_3}],
	\end{equation*}
	which implies
	\begin{equation*}
		\cJ_{1,r} \leq C\eta\ell_2^2\Big(\dfrac{\ell_1}{\ell_3}\Big)^3\sum_{i = 1}^N\nabla_i^*\1_{\Lambda_r^{(3\ell_1)}}(x_i)\nabla_i.
	\end{equation*}
	Finally, summing over $r\in\ell_1\bbZ^3$ using that
	\begin{equation}
		\label{eq:box_recombiation}
		\sum_{r\in\ell_1\bbZ^3}\1_{\Lambda_r^{(3\ell_1)}}(x_i) = 27,
	\end{equation}
	we find
	\begin{equation*}
		\sum_{r\in\ell_1\bbZ^3}\cJ_{1,r} \leq C\eta\ell_2^2\Big(\dfrac{\ell_1}{\ell_3}\Big)^3\sum_{i = 1}^N-\Delta_i.
	\end{equation*}
	
	\noindent
	\textit{Analysis of $\cJ_{2,r}$.} Again, using that $N_{r,i}^{(3\ell_1)}$ commutes with $p_{r,i}^{(3\ell_1)}$, we get
	\begin{equation*}
		\cJ_{2,r} \leq C\eta\Big(\dfrac{\ell_1}{\ell_2}\Big)^3\Big(\dfrac{\ell_1}{\ell_3}\Big)^3N_r^{(3\ell_1)}.
	\end{equation*}
	Therefore, by summing over $r\in\ell_1\bbZ^3$ using \eqref{eq:box_recombiation}, we obtain
	\begin{equation*}
		\sum_{r\in\ell_1\bbZ^3}\cJ_{2,r} \leq C\eta\Big(\dfrac{\ell_1}{\ell_2}\Big)^3\Big(\dfrac{\ell_1}{\ell_3}\Big)^3N.
	\end{equation*}
	This concludes the proof of \eqref{eq:general_two_body_potential_bound_weaker} in Proposition~\ref{prop:general_two_body_potential_bound}.
\end{proof}

\begin{figure}[ht]
	\centering
	\begin{tikzpicture}
	\tikzmath{\w = 6; \wi = \w - \w/4; \se = \w/5;}
	
	\fill[Green] (-\w/2 + \se,-\w/2 + 2*\se) rectangle (-\w/2 + 2*\se,-\w/2 + 3*\se);
	
	\fill[Blue] (-\w/2 + 3*\se,-\wi/2) rectangle (-\w/2 + 4*\se,-\w/2 + \se);
	
	\draw[Gray, thick, dashed] (-\w/2,\w/2) -- (\w/2,\w/2) -- (\w/2,-\w/2) -- (-\w/2,-\w/2) -- cycle;
	
	\draw[Gray, thick, dashed] (\w/2,-\w/2 + \se) -- (-\w/2,-\w/2 + \se);
	\draw[Gray, thick, dashed] (\w/2,-\w/2 + 2*\se) -- (-\w/2,-\w/2 + 2*\se);
	\draw[Gray, thick, dashed] (\w/2,-\w/2 + 3*\se) -- (-\w/2,-\w/2 + 3*\se);
	\draw[Gray, thick, dashed] (\w/2,-\w/2 + 4*\se) -- (-\w/2,-\w/2 + 4*\se);
	\draw[Gray, thick, dashed] (-\w/2 + \se,-\w/2) -- (-\w/2 + \se,\w/2);
	\draw[Gray, thick, dashed] (-\w/2 + 2*\se,-\w/2) -- (-\w/2 + 2*\se,\w/2);
	\draw[Gray, thick, dashed] (-\w/2 + 3*\se,-\w/2) -- (-\w/2 + 3*\se,\w/2);
	\draw[Gray, thick, dashed] (-\w/2 + 4*\se,-\w/2) -- (-\w/2 + 4*\se,\w/2);
	
	\draw[Red, ultra thick] (-\wi/2,-\wi/2) -- (-\wi/2,\wi/2) -- (\wi/2,\wi/2) -- (\wi/2,-\wi/2) -- cycle;
	
	\filldraw[black] (0,0) circle (1.5pt) node[anchor=north west]{\large $r$};
	\filldraw[black] (-\se,0) circle (1.5pt) node[anchor=north west]{\large $s$};
	\filldraw[black] (\se,-2*\se) circle (1.5pt) node[anchor=north west]{\large $t$};
	\node[color=Red](a) at (-\w/2 - 0.5,0) {\large $3\ell_1$};
	\draw[color=Red, ultra thick, dashed,<-](-\w/2 - 0.5,\wi/2) -- (a);
	\draw[color=Red, ultra thick, dashed,->](a) -- (-\w/2 -0.5,-\wi/2);
	\node[color=Gray](b) at (\w/2 + 0.5,\wi/2 + 0.2) {\large $\tilde{\ell}_3$};
	\draw[color=Gray, ultra thick, dashed,<-](\w/2 + 0.5,\w/2) -- (b);
	\draw[color=Gray, ultra thick, dashed,->](b) -- (\w/2 + 0.5,\w/2 - \se);
\end{tikzpicture}
	\caption{Illustration of a coverage of the box ${\color{Red}\Lambda_r^{(3\ell_1)}}$ by the smaller boxes ${\color{Gray}\Lambda_s^{(\tilde{\ell}_3)}}$. Some of the {\color{gray}grey} boxes overlap only partially with the {\color{red}red} one. The quantities {\color{Green}$N_{r,s}^{(3\ell_1,\tilde{\ell}_3)}$} and {\color{Blue}$N_{r,t}^{(3\ell_1,\tilde{\ell}_3)}$} respectively count the number of particles in the {\color{Green}green} and {\color{Blue}blue} rectangles.}
	\label{fig:coverage_large_box_smaller_boxes}
\end{figure}

Let us explain the rough idea behind \eqref{eq:convexity_bound_number_particles_box}. For this, we need to define
\begin{equation*}
	N_{r,s}^{(\ell,\ell')} = \sum_{i = 1}^N\1_{\Lambda_r^{(\ell)}}(x_i)\1_{\Lambda_s^{(\ell')}}(x_i),
\end{equation*}
which counts the number of particles that are contained both in $\Lambda_r^{(\ell)}$ and $\Lambda_s^{(\ell')}$. Firstly, the condition
\begin{equation}
	\label{eq:number_particles_large_enough_condition}
	N_r^{(3\ell_1)} \geq \eta_{\ell_1,\ell_3} \geq C(\ell_1/\ell_3)^3
\end{equation}
states that the box $\Lambda_r^{(3\ell_1)}$ contains sufficiently many particles so that, if we consider any coverage of $\Lambda_r^{(3\ell_1)}$ by disjoint smaller boxes $\Lambda_s^{(\tilde{\ell}_3)}$ (see Figure \ref{fig:coverage_large_box_smaller_boxes}) and distribute the particles evenly across these smaller boxes, then each of them contains at least $3$ particles. Essentially, this ensures that, for any given configuration, there are few boxes $\Lambda_s^{(\tilde{\ell}_3)}$ that contain less than $3$ particles. This allows us to prove the bound
\begin{equation*}
	N_r^{(3\ell_1)} \leq C\sum_{s \in \tilde{\ell}_3\bbZ^3}[N_{r,s}^{(3\ell_1,\tilde{\ell}_3)} \geq 3]N_{r,s}^{(3\ell_1,\tilde{\ell}_3)}
\end{equation*}
under the condition \eqref{eq:number_particles_large_enough_condition}. Said differently, this means that we can discard the information about the small boxes that contain at most $2$ particles. By a convexity argument, this bound implies
\begin{equation*}
	(N_r^{(3\ell_1)})^3 \leq C\Big(\dfrac{\ell_1}{\ell_3}\Big)^6\sum_{s \in \tilde{\ell}_3\bbZ^3}[N_{r,s}^{(3\ell_1,\tilde{\ell}_3)} \geq 3](N_{r,s}^{(3\ell_1,\tilde{\ell}_3)})^3,
\end{equation*}
again under the condition \eqref{eq:number_particles_large_enough_condition}. Summing over $r \in \ell_1\bbZ^3$ gives \eqref{eq:convexity_bound_number_particles_box}.

\begin{proof}[Proof of \eqref{eq:two_body_potential_more_singuilar} from Proposition~\ref{prop:general_two_body_potential_bound}]
	The proof strategy is the same as that of \eqref{eq:general_two_body_potential_bound} and \eqref{eq:general_two_body_potential_bound_weaker}, but simpler.
	\\
	
	\noindent
	\textbf{Step 1: Localisation into boxes.} Following the exact same steps as in the proof of \eqref{eq:general_two_body_potential_bound_weaker}, we find
	\begin{equation}
		\label{eq:_two_body_potential_bound_high_and_low_occupancy}
		\sum_{1\leq i< j\leq N}\1(\vert x_i - x_j\vert\leq \ell_2) \leq \sum_{r\in\ell_1\bbZ^3}[N_r^{(3\ell_1)} \geq C_{\ell_1,\ell_3}](N_r^{(3\ell_1)})^2 + \sum_{r\in\ell_1\bbZ^3}[N_r^{(3\ell_1)} < C_{\ell_1,\ell_3}](N_r^{(3\ell_1)})^2,
	\end{equation}
	where we defined $C_{\ell_1,\ell_3}$ by
	\begin{equation*}
		C_{\ell_1,\ell_3} = C\Big(\dfrac{\ell_1}{\ell_3}\Big)^3,
	\end{equation*}
	for some large enough universal constant $C$.
	\\
	
	\noindent
	\textbf{Step 2: Boxes with many particles.} An application of \cite[Lemma 4.5]{TriayVis26} yields
	\begin{equation*}
		\sum_{r\in\ell_1\bbZ^3}[N_r^{(3\ell_1)} \geq C_{\ell_1,\ell_3}](N_r^{(3\ell_1)})^2 \leq C\Big(\dfrac{\ell_1}{\ell_3}\Big)^3\sum_{s\in\tilde{\ell}_3\bbZ^3}[N_s^{(\tilde{\ell}_3)} \geq 2](N_s^{(\tilde{\ell}_3)})^2,
	\end{equation*}
	where $\tilde{\ell}_3$ is chosen such that any two particles $x_i$ and $x_j$ belonging to the same box $\Lambda_s^{(\tilde{\ell}_3)}$ satisfy $\vert x_i - x_j\vert \leq \ell_3$. Therefore,
	\begin{equation*}
		\sum_{r\in\ell_1\bbZ^3}[N_r^{(3\ell_1)} \geq C_{\ell_1,\ell_3}](N_r^{(3\ell_1)})^2 \leq C\Big(\dfrac{\ell_1}{\ell_3}\Big)^3\sum_{1\leq i<j\leq N}\1(\vert x_i - x_j\vert \leq \ell_3).
	\end{equation*}
	\\
	
	\noindent
	\textbf{Step 3: Boxes with few particles.} Using that
	\begin{equation*}
		\sum_{r\in\ell_1\bbZ^3}\1_{\Lambda_r^{(3\ell_1)}}(x_i) = 27,
	\end{equation*}
	we get, by definition of $C_{\ell_1,\ell_3}$,
	\begin{equation*}
		\sum_{r\in\ell_1\bbZ^3}[N_r^{(3\ell_1)} < C_{\ell_1,\ell_3}](N_r^{(3\ell_1)})^2 \leq C\Big(\dfrac{\ell_1}{\ell_3}\Big)^3N.
	\end{equation*}
	This finishes the proof of \eqref{eq:two_body_potential_more_singuilar} from Proposition~\ref{prop:general_two_body_potential_bound}.
\end{proof}

\appendix

\section{Stabilisation by adding a soft-core: proof of Proposition~\ref{prop:stabilisation_soft_core}}

\label{app:stabilisation_soft_core}

The proof of Proposition~\ref{prop:stabilisation_soft_core} is from \cite[Lemma 21]{Tri18}; we recall it here for completeness. The idea is to first find a smooth function $\phi_1$ with sufficient decay at infinity (whose Fourier transform is therefore very well-behaved) that satisfies
\begin{equation*}
	-\phi_1 \leq V \quad \textmd{outside $B(0,R_0)$.}
\end{equation*}
Then, we use this function to create a bounded function $\phi$ that also satisfies
\begin{equation*}
	\phi \leq V \quad \textmd{outside $B(0,R_0)$,}
\end{equation*}
and whose Fourier transform is nonnegative (see Figure~\ref{fig:stabilisation_soft_core}). In $B(0,R_0)$, the function $\phi$ is bounded, so we can find a positive soft-core $C\1(\vert \cdot\vert \leq R_0)$ such that
\begin{equation*}
	\phi \leq V + C\1(\vert \cdot\vert \leq R_0) \quad \textmd{on $\bbR^3$.}
\end{equation*}

\begin{proof}[Proof of Proposition~\ref{prop:stabilisation_soft_core}]
	Recall that by assumption, $V$ is bounded from below and satisfies
	\begin{equation}
		\label{eq:stabilisation_decay_assumption}
		V \geq -\dfrac{C}{\vert\cdot\vert^{3 + \varepsilon}},
	\end{equation}
	for some $C,\varepsilon > 0$. Let $\chi$ be a smooth nonnegative function supported in the ball $B(0,R_0/2)$ such that $\int\chi = 1$. Define
	\begin{equation*}
		\phi_1 = C_0\dfrac{\1(\vert\cdot\vert \geq R_0/2)}{\vert\cdot\vert^{3 + \varepsilon}}\ast\chi,
	\end{equation*}
	for some fixed constant $C_0$ taken large enough. In particular, this constant is taken such that
	\begin{equation*}
		-\phi_1 \leq V \quad \textmd{outside $B(0,R_0)$;}
	\end{equation*}
	the fact that such a constant exists is easily verified using \eqref{eq:stabilisation_decay_assumption}. In addition, define
	\begin{equation*}
		\phi_2:x\in\bbR^3 \longmapsto \dfrac{\mu}{\vert x\vert}(e^{-\lambda\vert x\vert /2} - e^{-\lambda\vert x\vert}) \quad \textmd{and} \quad \phi = \phi_2 - 2\phi_1,
	\end{equation*}
	for some parameters $\mu > 0$ and $\lambda > 1$ to be determined later. An elementary computation gives
	\begin{equation*}
		\phi_2 \leq \phi_1 \quad \textmd{outside $B(0,R_0)$},
	\end{equation*}
	as long as $\mu\leq C_\varepsilon \lambda^{2+\varepsilon}$, which we henceforth assume. Thus, the potential $V$ satisfies
	\begin{equation*}
		\phi \leq V \quad \textmd{outside $B(0,R_0)$}.
	\end{equation*}
	Moreover, the function $\phi$ is easily seen to be bounded. Since $V$ is bounded from below, this implies
	\begin{equation*}
		\phi \leq V + C\1(\vert\cdot\vert \leq R_0),
	\end{equation*}
	for $C$ large enough.
	
	Finally, the claim follows from the fact that, for $c_\varepsilon\lambda^2 \leq \mu \leq C_\varepsilon \lambda^{2 + \varepsilon}$, the Fourier transform of $\phi$ is positive, which implies that $\phi$ is classically stable. Indeed, the function $\phi_1$ is smooth and decays sufficiently fast, so its Fourier transform satisfies
	\begin{equation*}
		\hat{\phi_1}(p) \leq \dfrac{C_\varepsilon}{(1 + \vert p\vert^2)^3},
	\end{equation*}
	and an elementary computation that relies on the Fourier transform of the Yukawa potential (see e.g. \cite[Theorem 6.23]{LieLos01}) gives
	\begin{equation*}
		\hat{\phi_2}(p) \geq \dfrac{C\lambda^2\mu}{(\lambda^2 + \vert p\vert^2)^2}.
	\end{equation*}
	This finishes the proof of Proposition~\ref{prop:stabilisation_soft_core}.
\end{proof}

\begin{figure}[ht]
	\centering
	\resizebox{0.8\textwidth}{!}{%
		\begin{tikzpicture}
	\begin{axis}
		[
		axis lines=middle,
		axis line style={thick},
		y label style={anchor=east},
		ylabel={{\color{Blue}$V$},{\color{Red}$\phi_1$},{\color{Magenta}$\phi$}},
		xmin=-1.2, xmax=1.2,
		ymin=-1.5, ymax=3.5,
		xtick=\empty,
		ytick=\empty,
		samples=400,
		domain=-1:1,
		width=10cm,
		height=6cm,
		]
		
		\addplot[
		very thick,
		Blue,
		domain=-1.2:1.2,
		]
		{-exp(-(4*x)^2) -0.1*exp(-(x)^2)};
		
		\addplot[
		very thick,
		Red,
		domain=-1.2:1.2,
		]
		{-(1 - exp(-16*((x)^2)))*exp(-3*((x - 0.5)^2))/5 -(1 - exp(-16*((x)^2)))*exp(-3*((x + 0.5)^2))/5};
		
		\addplot[
		very thick,
		Magenta,
		domain=-1.2:1.2,
		]
		{0.6*(exp(-10*abs(x)/2) - exp(-10*abs(x)))/abs(x) -1.5*(1 - exp(-16*((x)^2)))*exp(-3*((x - 0.5)^2))/5 -1.5*(1 - exp(-16*((x)^2)))*exp(-3*((x + 0.5)^2))/5};
		
		\draw[Black,very thick,dotted] (0.5,{-(1 - exp(-16*((0.5)^2)))*exp(-3*((0.5 - 0.5)^2))/5 -(1 - exp(-16*((0.5)^2)))*exp(-3*((0.5 + 0.5)^2))/5}) -- (0.5,0);
		\draw[Black,very thick,dotted] (-0.5,{-(1 - exp(-16*((0.5)^2)))*exp(-3*((0.5 - 0.5)^2))/5 -(1 - exp(-16*((0.5)^2)))*exp(-3*((0.5 + 0.5)^2))/5}) -- (-0.5,0);
		
		\node[Black,anchor=south] at (0.5,0) {\large $R_0$};
	\end{axis}
\end{tikzpicture}
	}
	\caption{Illustration of the functions ${\color{Magenta}\phi_1}$ and ${\color{Red}\phi}$ for some given potential ${\color{Blue}V}$.}
	\label{fig:stabilisation_soft_core}
\end{figure}

\printbibliography

\end{document}